\begin{document}
\title[Makespan Minimization via Posted Prices]{Makespan Minimization via Posted Prices}
\author{Michal Feldman}
\authornote{Blavatnik School of Computer Science, Tel Aviv University, email: \texttt{mfeldman@tau.ac.il}.}
\affiliation{%
  \institution{Tel Aviv University}
  \city{Tel Aviv}
  \country{Israel}
  \department{Blavatnik School of Computer Science}}
\email{mfeldman@tau.ac.il}
\author{Amos Fiat}
\authornote{Blavatnik School of Computer Science, Tel Aviv University, email: \texttt{fiat@tau.ac.il}.}
\affiliation{%
  \institution{Tel Aviv University}
  \city{Tel Aviv}
  \country{Israel}
  \department{Blavatnik School of Computer Science}}
\email{fiat@tau.ac.il}
\author{Alan Roytman}
\authornote{Department of Computer Science, University of Copenhagen, email: \texttt{alanr@di.ku.dk}.}
\affiliation{%
  \institution{University of Copenhagen}
  \department{Department of Computer Science}
  \city{Copenhagen}
  \country{Denmark}
}
\email{alanr@di.ku.dk}

\begin{abstract}
We consider job scheduling settings, with multiple machines, where jobs arrive online
and choose a machine selfishly so as to minimize their cost. Our objective is the classic
makespan minimization objective, which corresponds to the completion time of the last job to complete.
The incentives of the selfish jobs may lead to poor performance. To reconcile the differing objectives,
we introduce posted machine prices. The selfish job seeks to minimize the sum of its completion time on
the machine and the posted price for the machine.
Prices may be static (i.e., set once and for all before any arrival) or dynamic (i.e., change over time), but they are determined only by the past, assuming nothing about upcoming events.
Obviously, such schemes are inherently truthful.

We consider the competitive ratio: the ratio between the makespan achievable by the pricing scheme and that of the optimal algorithm.
We give tight bounds on the competitive ratio for both dynamic and static pricing
schemes for identical, restricted, related, and unrelated machine settings.
Our main result is a dynamic pricing scheme for related machines that
gives a constant competitive ratio, essentially matching the competitive ratio of online algorithms for this setting.
In contrast, dynamic pricing gives poor performance for unrelated machines. This lower bound
also exhibits a gap between what can be achieved by pricing versus what can be achieved by online algorithms.
\end{abstract}

%
%
%

%
%


\thanks{
The work of M. Feldman and A. Roytman was partially supported by the European Research Council under the European Union's Seventh Framework Programme (FP7/2007-2013) / ERC grant agreement number 337122.
The work of A. Roytman was also partially supported by Thorup's Advanced Grant DFF-0602-02499B from the Danish Council for Independent Research.
}

\maketitle


\section{Introduction}

Online algorithms can be viewed as follows: events arrive over time, and upon the arrival of an event, the algorithm makes a decision, based only on the prior and current event, without knowing future events. There is a function that maps outcomes to costs or benefits, where the goal of such an algorithm is either to minimize costs or to maximize benefits.  The competitive ratio of an online algorithm seeks to compare the outcome of the online algorithm with the optimal outcome. The term competitive analysis was coined in \cite{karlin1988} and gives the ratio between the outcome (cost or benefit) achieved by an online algorithm and the outcome of an offline optimal solution. Herein we only deal with cost problems so this ratio is $\geq 1$.

We consider a setting where every online event is associated with a selfish agent. The agents have some associated true type that describes the nature of the event. Agents have some utility (or disutility) associated with the outcome. For many online algorithms, the decisions made by the algorithm might not be in the best interest of the agent. This may result in the agent misrepresenting her type so as to achieve a better outcome for herself.

The design of {\em mechanisms} mitigates the problem of strategic behavior described above.
In a mechanism, agents report their type, and the mechanism decides upon an outcome and upon payments, where payments are used to align the incentives of the agents with that of the mechanism.
A mechanism is truthful if it is always in the best interest of any agent to report her type truthfully.
In online settings, agents arrive sequentially, and the mechanism decides upon an outcome and payment for each arriving agent as they arrive\footnote{Such online mechanisms are called {\sl prompt} in that outcomes and payments are determined immediately, non-prompt online mechanisms have also been studied where, for example, the payment is only determined later.}.

With few notable exceptions (e.g., \cite{DBLP:conf/stoc/NisanR99}), problems studied in mechanism design (both online and offline settings) involve {\sl maximizing} social welfare or revenue.
Although optimal solutions to maximization/minimization objectives can be cast as the other, the competitive ratio is quite different in the two settings. Online algorithms have been devised with both maximization and minimization objectives. The technique of ``classify and randomly select'' \cite{ABFR94} often gives simple randomized algorithms for maximization objectives which also naturally translate into truthful mechanisms. In contrast, minimization objectives (e.g., $k$-server and makespan) require entirely different techniques. Converting online algorithms into mechanisms without performance degradation opens up an entire new class of problems for which incentive compatible mechanism design is applicable.

\vspace{2mm}
{\bf Dynamic posted prices.}
We consider truthful online mechanisms that take the form of {\em dynamic posted prices}. Dynamic pricing schemes are truthful online mechanisms that post prices for every possible outcome, before the next agent arrives. Then, the agent chooses the preferred outcome ---
minimizing the cost for the outcome minus the price tag associated with the outcome.

Such a mechanism is inherently truthful, since prices are determined irrespective of the type of the next agent.
Posted price mechanisms have many additional advantages over arbitrary truthful online mechanisms. In particular such mechanisms are simple~\cite{HR09}: agents need not trust nor understand the logic underlying the truthful mechanism, agents are not required to reveal their type, and there is no need to verify that the agents indeed follow the decision made by the truthful online mechanism.\footnote{One may suspect that any online mechanism gives rise to dynamic pricing schemes. This is not quite true: the online mechanism must be prompt, and, moreover, the online mechanism may require that ties (equal utility choices) be broken in a particular manner, and as a function of the agent type. In contrast, with dynamic pricing schemes agents may break ties arbitrarily. Many thanks to Moshe Babaioff,  Liad Blumrosen, Yannai A. Gonczarowski, and Noam Nisan for discussions clarifying this point. Clearly, any dynamic pricing scheme gives rise to a prompt online truthful mechanism with the same performance guarantees.}

 A posted price mechanism is a truthful online algorithm, and as such, can perform no better than the best online algorithm.
Our main goal in this paper is to study the performance of dynamic posted price mechanisms (quantified by the competitive ratio measure) and compare them with the performance of the best online algorithm.
One may think of this problem as analogous to one of the central questions in algorithmic mechanism design in offline settings: compare the performance of the best truthful mechanism (quantified by the approximation ratio measure) with the performance of the best non-truthful algorithm.

\vspace{2mm}
{\bf Makespan minimization in job scheduling.}
In this paper we study the design of online mechanisms for makespan minimization in job scheduling.
Events represent jobs, the job type contains the job's processing times on various machines.
Agents seek to complete their job as soon as possible, and therefore prefer to be assigned to a machine whose
load (including the new job) is minimized\footnote{In this interpretation ``load'' is the time required by the
server to deal with all current jobs in the server queue, and jobs are processed in a first-in-first-out manner,
i.e., jobs enter a server queue. In some papers ``load'' is used in the context of round-robin processing.}.
For simplicity of exposition, we assume that all jobs arrive (sequentially) at time zero. However, our positive
results hold even if jobs arrive at arbitrary times. Clearly, adding options (arbitrary arrival times) does
not invalidate impossibility results.  Existing online algorithms for the problem (e.g.,~\cite{AAFPW93}) are not truthful;
in that a job may misrepresent its size so as to get a preferential assignment to a machine.

An online truthful mechanism for this setting determines an allocation and payment for each arriving agent upon arrival.
That is, upon the arrival of a job, based on the job's processing times, the mechanism assigns the job to some machine and determines the payment the agent should make.
The cost of an agent is the sum of the machine's load (including her own processing time) and the payment.
Each agent seeks to minimize her cost.

A dynamic posted price mechanism for this setting sets prices on each machine, {\em before} the next agent arrives (prices may change over time).
The next agent to arrive seeks to minimize her cost, i.e., the load on the chosen machine (including her own load) plus the posted price on the machine. The agent breaks ties arbitrarily.



We consider this question for the goal of makespan minimization in job scheduling,
where the dynamic pricing scheme seeks to minimize the makespan, whereas selfish jobs seek to minimize their own completion time.
We assume FIFO processing within a machine, so the completion time of a job is the sum of the current load (prior to
the arrival of the job) plus the job's own processing time on the machine.  This problem has many applications,
including managing queues at banks, cloud computing settings where customers submit jobs and can lie about their processing
times, and crowdsourcing settings where taskmasters wish to hire workers to complete tasks while lying about
how long their task takes to complete.  In all such applications, we are interested in balancing loads appropriately.
To this end, we consider online makespan minimization for identical, restricted, related,
and unrelated machine models.

\vspace{2mm}
{\bf Examples.} To clarify the issue of selfish jobs, consider the following small toy problem: the setting is that of machines with speeds, machine $\#1$ has speed $1$, machine $\#2$ has speed $1/2$. There are also two jobs, job $a$ is of size $1/2$ and job $b$ is of size $1$. Clearly, the minimal makespan is achieved by assigning job $a$ to machine $\#2$ and job $b$ to machine $\#1$. This gives a makespan of one. Assume that the order of arrival is $a$, $b$. Job $a$ will prefer machine $\#1$ (completion time $1/2$) to machine $\#2$ (completion time $1$). Job $b$ will also prefer machine $\# 1$ (completion time $1.5$) to machine $\#2$ (completion time $2$).

In this specific case a static pricing of $1/2+\epsilon$ for machine $\#1$ and a price of zero for machine $\#2$ will result in the optimal makespan irrespective of the order of arrival of the jobs. If the order is $a$, $b$ then job $a$ prefers machine $\#2$ (completion time $1$ $+$ price $0$ $=$ $1$) over machine $\#1$ (completion time $1/2$ $+$ price $1/2+\epsilon$ $=$ $1+\epsilon$). The second job to arrive, job $b$, prefers machine $\#1$ (completion time $1$ $+$ price $1/2+\epsilon$ $=$ $1.5+\epsilon$) over machine $\#2$ (completion time $3$ $+$ price $0$ $=$ $3$). One can verify that the order $b,a$ will also achieve the same minimal makespan result.

The above example is somewhat misleading as we do not want to derive prices for a specific set of arriving jobs but for any arbitrary set and arbitrary order. In fact, we show that, in general, static prices are no better than a complete lack of prices (see Section~\ref{sec:static_pricing}), and only dynamic prices can guarantee a constant competitive ratio. A more detailed example that also illustrates the use of our dynamic pricing scheme for related machines (see Section \ref{sec:prices-related}) appears in Appendix \ref{app:example}.

\subsection{Our Model}

We have $m$ machines and $n$ jobs which arrive in an online manner.
Unrelated, related, restricted, and identical machine models are defined as follows:
\begin{enumerate}
\item For unrelated machines, the processing time of job $j$ on machine $i$ is given by $p_{ij}$.
\item In the related machines model,
each machine $i$ has some speed $s_i$ and each job $j$ has some associated size $p_j$.  The processing
time of job $j$ on machine $i$ is given by $p_{ij} = \frac{p_j}{s_i}$.
\item In the restricted machines model, each job $j$ has some associated size $p_j$.  The processing
time of job $j$ on machine $i$ is either $p_j$ or $\infty$.
\item In the identical machines model, each job $j$ has some associated size $p_j$, which is job $j$'s processing time on all machines.
\end{enumerate}
In the online setting, neither processing times, $p_{ij}$, nor size, $p_j$, are known until job $j$ arrives.
Machine speeds $s_i$ are known in advance.
While jobs do arrive in adversarial order, by renaming we can assume that job $j$ is the $j^{th}$ job to arrive.

We denote by $\sigma$ an input sequence consisting of jobs $1$ through $n$.
For a machine $i$, we let $M_i(j)$ denote the set
of jobs that have been assigned to machine $i$ after jobs $1$ through $j$ have arrived.  We denote the load
on machine $i$ after jobs $1$ through $j$ have arrived by $\ell_i(j) = \sum_{b \in M_i(j)}p_{ib}$.
For the makespan objective, the goal is to minimize $\max_i \ell_i(n)$.  Given an input sequence $\sigma$,
we denote by $L^*(\sigma)$ an optimal solution that is omniscient and knows the entire input sequence $\sigma$
in advance (i.e., an optimal solution that knows all jobs' processing times).  When clear from the context, we omit the parameter $\sigma$ and simply write $L^*$.

A \emph{dynamic pricing scheme} $D$, given an input sequence $\sigma$, outputs a sequence of $n$ vectors
$\pi_1,\ldots,\pi_n \in \mathbb{R}^m$, where each $\pi_j = \left(\pi_{1j},\ldots,\pi_{mj}\right)$
represents a vector of prices for each of the $m$ machines.  Each vector $\pi_j$ is determined before the $j^{th}$ job
arrives.  We view jobs as rational selfish agents who must choose the machine to which they wish to be assigned.  In particular, we model each arriving agent $j$'s
cost on machine $i$ as $c_{ij} = \ell_i(j-1) + p_{ij} + \pi_{ij},$ where $\ell_i(j-1)$ is the load on machine $i$ before $j$ arrives,
$p_{ij}$ is the processing time of job $j$ on machine $i$, and $\pi_{ij}$ is the price on machine $i$ (determined before $j$ arrives).
Hence, agent $j$'s cost on machine $i$ represents how long agent $j$ must wait in order to be processed by machine $i$, given
the load of the machine upon $j$'s arrival, plus some price amount determined by the dynamic pricing scheme.
We assume that agents are rational and wish to minimize their cost.  That is, agent $j$ chooses a machine that attains the minimum
value $\min_i c_{ij}$.

Note that, in our model, each player is a job, not a machine.  Hence, a job (i.e.,
player) may potentially misreport its processing times to the scheme in order to lower its incurred cost.  However, dynamic
pricing schemes are inherently truthful (since the prices are set independently of reported processing times),
and hence jobs never benefit from lying regarding their processing times on machines.
We denote by $D(\sigma)$ the makespan of the schedule produced by the dynamic pricing scheme $D$ given input $\sigma$, $D(\sigma)=\max_i \ell_i(n)$ --- the maximum load of any machine.

We can similarly define a \emph{static pricing scheme}, which simply sets one $m$-dimensional vector of prices $\pi$ (i.e., a single value for each machine)
before any agents arrive.  We do not permit static pricing schemes to change prices over time (so that $\pi_{i1} = \pi_{ij}$ for all $i$
and $j \geq 1$).  For this reason, when referring to static pricing schemes, we simply use one subscript instead of two.
In particular, we write $\pi_{i*}$ to denote the price on machine $i$ (at all times), so that $\pi = (\pi_{1*},\ldots,\pi_{m*})$.

We say that a dynamic pricing scheme is $c$-competitive if, given any input sequence $\sigma$, the scheme
always produces a schedule with a makespan satisfying $D(\sigma) \leq c \cdot L^*(\sigma) + a$ (assuming that agents behave selfishly),
where we allow some additive constant $a$.

\subsection{Our Contributions}

We give tight results for the competitive ratios that can be achieved via dynamic and static
pricing schemes for the problem of minimizing makespan.  We study identical, related, restricted, and unrelated machine models.
Our results, in comparison with previous work, are summarized in Table~\ref{tab:cr}.
Our main results are as follows ($m$ denotes the number of machines).

\begin{enumerate}
\item A dynamic pricing scheme that achieves an $O(1)$ competitive ratio for the related machines model.  This matches the $O(1)$-competitive result (of a non-truthful online algorithm) given in~\cite{AAFPW93}.

\item A lower bound on the competitive ratio of any dynamic pricing scheme of $\Omega(m)$ for unrelated machines.
Our lower bound holds for any randomized dynamic pricing scheme, even assuming an oblivious adversary.
\end{enumerate}

To the best of our knowledge,
the lower bound for unrelated machines exhibits the first gap between what can be achieved by dynamic pricing schemes versus what can be achieved by online algorithms.
That is, a gap of $\Omega(m)$ (for randomized dynamic pricing) versus $O(\log m)$ (achieved via deterministic online algorithms~\cite{AAFPW93}).

Our $O(1)$-competitive dynamic pricing scheme for related machines also holds in a more general model
where jobs arrive in real time (as opposed to arriving in sequence).  In such a setting, jobs are processed
over time on machines and are eventually removed from machines completely (upon being
fully processed).  The objective is to minimize the completion time of the last job to complete.

In addition, we show that static pricing schemes and the online greedy algorithm\footnote{We refer to the online greedy algorithm as the greedy algorithm that
assigns each job to a machine that minimizes the current load plus processing time of the job on the machine.} achieve the same performance up to a constant factor.
Clearly, the static pricing scheme that sets all prices to zero mimics the greedy algorithm.
Furthermore, we show that any lower bound on the competitive ratio of the greedy algorithm translates to the same lower
bound on any static pricing scheme\footnote{This result holds for any deterministic pricing scheme. For randomized schemes,
it holds as long as the lower bound for the greedy algorithm does not depend on the tie-breaking rule.} (up to constant factors),
for all machine models considered in this paper.  We note that such a greedy algorithm is
$O(1)$-competitive for identical machines~\cite{G66}, $\Theta(\log m)$-competitive for related machines~\cite{AAFPW93},
$\Theta(\log m)$-competitive for the restricted assignment model~\cite{ANR92}, and $\Theta(m)$-competitive for
unrelated machines~\cite{AAFPW93}.  These results appear in the columns labeled Greedy and Static Pricing in Table~\ref{tab:cr}.


\begin{table}%
\caption{Competitive ratio comparison of the greedy algorithm, the best online algorithm, static pricing schemes,
and dynamic pricing schemes.  Here, the greedy algorithm denotes the algorithm that assigns each job to the machine
that minimizes the current load plus processing time of the job on the machine. Results in the Static Pricing and Dynamic Pricing columns are from this paper.}
\label{tab:cr}
\begin{minipage}{\columnwidth}
\begin{center}
\begin{tabular}{lllll}
  \toprule
Machine Model & Greedy & Best Online & Static Pricing & Dynamic Pricing \\ \midrule
Identical & $O(1)$ \cite{G66} & $O(1)$ \cite{G66}& $O(1)$ & $O(1)$ \\
Related & $\Theta(\log m)$ \cite{AAFPW93}  & $O(1)$ \cite{AAFPW93} & $\Theta(\log m)$ & $O(1)$  \\
Restricted & $\Theta(\log m)$ \cite{ANR92} & $\Theta(\log m)$ \cite{ANR92} & $\Theta(\log m)$ & $\Theta(\log m)$ \\
Unrelated & $\Theta(m)$ \cite{AAFPW93} & $\Theta(\log m)$ \cite{AAFPW93,ANR92} & $\Theta(m)$ & $\Theta(m)$ \\
  \bottomrule
\end{tabular}
\end{center}
\end{minipage}
\end{table}%

\subsection{Techniques}

{\bf Positive Results for Related Machines.}
Our $O(1)$-competitive dynamic pricing scheme for the related machines model is inspired by the corresponding
related machines algorithm given in~\cite{AAFPW93}, referred to as Slow-Fit in \cite{AKPPW97}.


We now describe the main ideas behind our main result by discussing Slow-Fit.  We assume that machines are sorted
in increasing order of their speed, so that $s_1 \leq \cdots \leq s_m$.  In particular,
Slow-Fit operates in phases, where each phase maintains a lower bound $\Lambda$ on the current optimal solution.
The estimate $\Lambda$ doubles from phase to phase. A job $j$ is said to be feasible on machine $i$ if
$\ell_i(j-1) + \frac{p_j}{s_i} \leq 2 \Lambda$. Slow-Fit assigns the job to the lowest index (slowest) machine on which it is feasible.
If no machine is feasible, Slow-Fit doubles $\Lambda$.  This doubling process repeats until $\Lambda$ exceeds the value
of the optimal solution (i.e., $\Lambda$ becomes an upper bound), after which, for any incoming job, some machine is feasible (and
hence such jobs can be assigned).  Clearly, Slow-Fit depends
on the incoming job's size. The challenge in emulating Slow-Fit via a dynamic pricing scheme is that prices must be set before the size of the next job is revealed.

To show the underlying ideas, we now make several assumptions for which we show how to set prices for two
machines that perfectly emulate Slow-Fit. The assumptions are (a) Both machines have different speeds,
(b) Selfish jobs break ties in favor of machine $1$, and (c) $\Lambda$ is a known upper bound on the optimal solution.

Without loss of generality the price on machine $1$ is zero. Slow-Fit assigns job $j$ to machine $1$ if and only if
job $j$ is feasible on machine $1$, hence, we would like to set a price on machine $2$, $\pi_{2j}$, so that
$$\ell_1(j-1)+\frac{p_j}{s_1} \leq \ell_2(j-1)+\frac{p_j}{s_2} + \pi_{2j} \Longleftrightarrow \ell_1(j-1) + \frac{p_j}{s_1} \leq 2 \Lambda.$$
This is achieved by setting
$$\pi_{2j} = \ell_1(j-1)-\ell_2(j-1) + s_1\left(\frac{1}{s_1} - \frac{1}{s_2}\right)\left(2\Lambda - \ell_1(j-1)\right).$$
As the price $\pi_{2j}$ is independent of $p_j$ this gives a valid dynamic pricing scheme. Substitution and rearrangement show
that a job is assigned to machine 1 if and only if it is feasible on machine 1, as required. Note that this does not hold for equal speed
machines, and if tie-breaking is not in favor of machine 1.

It follows from the simple example above, that the following issues must be considered so as to construct a dynamic pricing scheme that attempts to emulate Slow-Fit, these are:
\begin{itemize}
    \item {\sl Equal Speed Machines}:
      Imagine that machines $i$ and $i+1$ have the same speed, and job $j$ is feasible on both,
then job $j$ should be scheduled on machine $i$. However, it may be that a larger job $j$ is infeasible
on $i$ but still feasible on $i+1$, in this case it should be assigned to $i+1$. However, if the machines
have the same speed, then irrespective of any prices and job size, the same machine will
always be chosen (as the difference in costs between machines $i$ and $i+1$ is constant).

  \item {\sl New Phase Recognition}: A new phase starts when the job has no feasible machine. However, the
pricing scheme cannot tell that a new phase is about to begin because it does not know the size of the next job to arrive.

  \item {\sl Machine Tie-Breaking}: Beyond the issue of feasibility (which is also a problem, see above),
as the job size increases, different machines (of different speeds) will attain the minimal cost,
irrespective of the prices and the current loads. Ergo, a job cannot be assumed to choose the lowest index machine.
  \end{itemize}

If none of these issues were to arise, then it would be possible to come up with a dynamic pricing scheme that would precisely
mimic the decisions made by Slow-Fit (as in the two machine example above).  To deal with these issues, we design a new online
algorithm, Flex-Fit, a variant of Slow-Fit (see Algorithm~\ref{alg:onrel}). Flex-Fit allows more flexibility in assigning jobs to multiple
machines, and in deciding when to start a new phase. This new algorithm does lend itself to dynamic pricing schemes with the
same competitive ratio, up to a constant factor.

The dynamic pricing scheme that emulates Flex-Fit is described in Algorithm \ref{alg:dyrel}.
We carefully choose a subset of machines that have strictly increasing speeds on which to place finite prices.
Other machines get a price of infinity. The prices are set such that the job prefers lower indexed machines
to higher indexed machines if and only if the job is feasible on the lower indexed machine.



\vspace{2mm}

{\bf Impossibility Results for Unrelated Machines.}
We next describe techniques used in our $\Omega(m)$ lower bound on the competitive ratio of dynamic
pricing for unrelated machines. As a warm up, we give a deterministic lower bound.  The lower bound
job sequence consists of two types of jobs, depending on two cases regarding the behavior of the deterministic dynamic pricing scheme.  In case 1,
we introduce a type 1 job that results in an increase in the sum of machines' loads, for the dynamic pricing scheme.
In contrast, any type 1 job is assigned in an optimal solution without an increase in any machine load.
In case 2, we introduce a type 2 job, which always chooses  machine $1$ (under the dynamic pricing scheme).
On the other hand, an optimal solution can always
spread out any sequence of $m$ type 2 jobs. This input sequence shows a gap of $\Omega(m)$ for the competitive ratio
of deterministic dynamic pricing schemes.

Our randomized lower bound holds against oblivious
adversaries (i.e., adversaries that must construct the entire input sequence in advance,
before seeing any coin flips of the algorithm).  To achieve our randomized lower bound,
we use the same two types of jobs as in the deterministic case.  We show
how to construct such a sequence obliviously, depending on the relative probability of being in case 1 or in case 2.

\vspace{2mm}

{\bf Static Pricing $\equiv$ Greedy.}
The non trivial direction is to show that every static
pricing scheme can be as bad as the greedy algorithm.  To do so, we observe
that static pricing schemes can be viewed as starting the online process with some initial (arbitrary) imbalance in the loads.  For deterministic static pricing schemes we show how to flatten out
the effective loads ($=$ load $+$ price) so that they are all equal. Once this is done,
we can then apply the greedy lower bound sequence, obtaining a similar lower bound
result for any static pricing scheme.

For randomized schemes, we give a different construction, that holds as long as the lower bound for the greedy algorithm does not depend on the tie-breaking rule. The idea is to blow up the job sizes in the greedy lower bound sequence so that the initial imbalance in the effective load becomes negligible.

\subsection{Related Work}



\paragraph{Online Algorithms}

Introduced in the context of self-adjusting search trees \cite{ST85a}, paging, list update \cite{,ST85b}, and snoopy caching \cite{KarlinMRS86}, there soon arose a vast host of online problems for which competitive analysis was applied. These include problems such as metrical task systems \cite{borodin92}, the $k$-server problem \cite{ManasseMS90,KP95}, scheduling problems, routing problems, and many more. One particular class of problems that has been widely studied is that of online makespan minimization~\cite{AAFPW93,ANR92,MRT13,ABFP13}.

\vspace{-2mm}

\paragraph{Online Makespan Minimization.}
The literature on online makespan minimization is vast, we only discuss the most relevant
works.  Online load balancing results for a variety of machine
models appear in~\cite{AAFPW93}.  An $O(\log m)$-competitive algorithm for unrelated machines and an $8$-competitive algorithm for related machines
were given (i.e., the Slow-Fit algorithm).  It was also shown that greedy is $\Theta(m)$-competitive for unrelated machines
and $\Theta(\log m)$-competitive for related machines.
The greedy algorithm was shown to be $\Theta(\log m)$-competitive for the restricted assignment model
(moreover, no online algorithm can do better)~\cite{ANR92}.  Results for the identical machines model appear in~\cite{BFKV92} and~\cite{A97}, where
a $(2-\epsilon)$-competitive online algorithm for a small fixed $\epsilon > 0$ and a $1.923$-competitive
algorithm were given, respectively.

Many other makespan minimization problems have been studied in the online setting, including different objectives
such as minimizing the $L_p$ norm for $p \geq 1$~\cite{AAGKKV95}
(the classic makespan minimization problem corresponds to minimizing the $L_{\infty}$ norm), settings where machines
have activation costs~\cite{ABFP13,MRT13}, and load balancing in the multidimensional setting~\cite{MRT13,IKKP15}.

\vspace{-2mm}

\paragraph{Static and Dynamic Pricing Schemes for Online Settings.}
Dynamic pricing schemes for a variety of problems appear in~\cite{CEFJ15}.  In particular,~\cite{CEFJ15} gave an
$O(k)$-competitive algorithm for the $k$-server problem on a line, an $O(m)$-competitive algorithm
for metrical task systems (where $m$ denotes the number of states), and a competitive ratio that is logarithmic in the
ratio of the maximum to minimum distances between points for metrical matching on a line.
Additional static and dynamic pricing schemes appear in~\cite{FMN08}, where queue management problems were studied, and constant competitive ratios were obtained for social welfare.
%
Dynamic pricing schemes were also considered in~\cite{KL03}, in which the revenue maximization problem where a seller has an unlimited supply
of identical goods was studied.
A dynamic pricing scheme for routing small jobs (relative to the edge capacities)
through a network was considered in \cite{AAM03}.




\vspace{-2mm}

\paragraph{Posted Prices for Social Welfare and Revenue.}
Posted pricing schemes~\cite{FGL15,CHK07,CHMS10}
need not be online, may use non-anonymous pricing, and often assume something is known about the future
(e.g., public valuations, Bayesian settings, etc.).
There is a large body of works on posted price mechanisms for social welfare and revenue maximization.
In the full information setting (only the order of arrival is unknown), a static posted pricing scheme was given that
obtains the optimal welfare for unit-demand buyers, and at least half of the optimal welfare for any valuation function~\cite{CEFF16}.
This uses ideas from \cite{FGL13}.
%
The Bayesian setting was considered in~\cite{FGL15}, where agents'
valuations are drawn from a product distribution over XOS valuations.
A static posted pricing scheme was given that achieves at least half the optimal welfare (in expectation).
A general framework for the design of posted price mechanisms for welfare maximization in Bayesian settings was devised in~\cite{DFKL16}.
Pricing schemes for revenue maximization in Bayesian settings were considered in \cite{CHK07,CHMS10,CMS10}.
In these settings agents arrive sequentially and are offered (non-anonymous) prices.
It was shown that the optimal revenue can be approximated to within a constant factor in various single-parameter and multi-parameter settings.

\vspace{-2mm}

\paragraph{Coordination Mechanisms for Job Scheduling.}
There is also a research agenda within the price of anarchy literature that studies the notion of coordination
mechanisms.  This body of work focuses on non-truthful mechanisms in the offline setting, where jobs are selfish
agents (note that jobs are also selfish agents in our work).  Here, performance is measured in terms of the price of anarchy.
A coordination mechanism for identical machines appears in~\cite{CKN04}, where it was shown that the price of anarchy
is $\frac{4}{3} - \frac{1}{3m}$.  Various local policies for a variety of machine models appear in~\cite{ILMS05}.
It was shown that any deterministic coordination mechanism has a price of anarchy of $O(\log m)$
for the related and restricted models, and an
$\Omega(\log m)$ lower bound was given for the price of anarchy for the restricted model.
For unrelated machines, a $\Theta(m)$ bound was given for a simple randomized policy.
The weighted sum of completion times objective has also been studied~\cite{ACH14}.
\vspace{-2mm}

\section{Pricing Related Machines}
\label{sec:prices-related}

We begin by giving an online algorithm for the load balancing problem on
related machines with a constant competitive ratio.  Our online algorithm
is inspired by the Slow-Fit algorithm~\cite{AAFPW93}.  We do this to aid us in designing a dynamic pricing
scheme that can mimic the behavior of the online algorithm.  This ensures that our dynamic pricing scheme
will have the same competitive ratio as the online algorithm.

We first assume that machines are sorted in increasing order of speed, so that
$s_1 \leq s_2 \leq \cdots \leq s_m$ (i.e., the first machine is the slowest machine,
and the $m^{th}$ machine is the fastest machine).  Our algorithm and dynamic pricing scheme proceed in phases, where each phase
depends on our current estimate $\Lambda$ of the optimal makespan $L^*$.  A new phase begins upon realizing that the current estimate
is too small, at which point we update the estimate accordingly.  We use the notion of virtual loads in
our algorithms, which we denote by $\hat{\ell}_i(j)$.  Virtual loads capture the real load within a particular phase,
and are reset to zero once a new phase begins.  The real load on a machine is essentially given by the sum
of virtual loads over all phases.  All loads and virtual loads begin at zero.

Our algorithms use the notion of representative machines, which are determined by the current virtual loads $\hat{\ell}_i(j)$.
\begin{definition}[Representative]
Fix any machine $i$ and job $j$.  Let $R = \{k : s_k = s_i\}$ be the set of machines with the same speed as $i$.
We say the representative of machine $i$ when job $j$ arrives is an arbitrary machine $k \in R$ that minimizes $\hat{\ell}_k(j-1)$.
We denote the representative of machine $i$ when $j$ arrives by $r_i(j)$.
\end{definition}
Note that, for any machine $i$, if $i$ is the only machine with a speed of $s_i$, then we have $r_i(j) = i$ (for all $j$).
Hence, the notion of a representative is mainly useful when there are multiple machines with the same speed.  In particular, the notion
of representatives enables us to choose one machine out of many that have the same speed (note that representatives may change over time
as jobs arrive).  Moreover, we always have the property that $s_{r_i(j)} = s_i$.

\subsection{Flex-Fit: A Variant of Slow-Fit}

\begin{algorithm}
\DontPrintSemicolon
Assign job $1$ to machine $r_m(j)$ \;
$\Lambda \leftarrow \frac{p_1}{s_m}$ \;
$\hat{\ell}_i(1) \leftarrow 0$ for all $i$ \;
\While{job $j$ arrives} {
	$T \leftarrow \{i : \hat{\ell}_i(j-1) + \frac{p_j}{s_i} \leq  (2+\epsilon) \Lambda\}$ \;
	\If{$T \neq \emptyset$} {
		$S \leftarrow \{i : \hat{\ell}_i(j-1) + \frac{p_j}{s_i} \leq  2 \cdot \Lambda\}$ \;
		\If{$S = \emptyset$} {
			Optionally call New-Phase($j$) and continue with the next job
		}
		$k \leftarrow$ minimum machine index in $S$ (set $k \leftarrow m$ if $S = \emptyset$)\;
		Assign $j$ to an arbitrary machine $r^* \in \{r_i(j) : s_i \leq s_k \textrm{ and } i \in T\}$ \;
		$\hat{\ell}_{r^*}(j) \leftarrow \hat{\ell}_{r^*}(j-1) + \frac{p_j}{s_i}$ \;
		$\hat{\ell}_{i'}(j) \leftarrow \hat{\ell}_{i'}(j-1)$ for all $i' \neq r^*$ \;
	}
	\Else{
		New-Phase($j$)
	}
}
\caption{Flex-Fit: A Variant of Slow-Fit\label{alg:onrel}}
\end{algorithm}

\begin{algorithm}
\DontPrintSemicolon
\SetKwInOut{Input}{input}
\Input{Job $j$}
	Assign $j$ to any machine with a speed of $s_m$ \;
	$\Lambda \leftarrow \max\{2, 2^{\left\lceil \log_2 \frac{p_j}{s_m \Lambda}\right\rceil}\} \cdot \Lambda$ \;
	$\hat{\ell}_i(j) \leftarrow 0$ for all $i$\;
\caption{New-Phase\label{alg:phase}}
\end{algorithm}

For any fixed $\epsilon > 0$, we give Algorithm Flex-Fit (which is inspired by the Slow-Fit algorithm in~\cite{AAFPW93}).
The main theorem we show in this section is that Flex-Fit is $O(1)$-competitive for the makespan minimization
problem on related machines.

We give some intuition for the algorithm.  Initially, we assign the first job to some fastest machine and obtain a lower
bound estimate of $L^*$, the optimal makespan.  The algorithm considers two sets:
$T = \{i : \hat{\ell}_i(j-1) + \frac{p_j}{s_i} \leq  (2+\epsilon) \Lambda\}$ and
$S = \{i : \hat{\ell}_i(j-1) + \frac{p_j}{s_i} \leq  2 \cdot \Lambda\}$ (note that $S \subseteq T$).
In particular, a job is feasible on machine $i$ if and only if $i \in S$.  The set $T$ consists of machines
that are slightly infeasible for job $j$.

If $T = \emptyset$, then the algorithm simply begins a new phase. Starting a new phase corresponds to assigning
the job $j$ to any machine of speed $s_m$, along with updating $\Lambda$
appropriately and resetting virtual loads.   If $S \neq \emptyset$, then the algorithm
finds the lowest indexed machine in $S$, namely machine~$k$.  In this case, it is free to assign $j$ to the
representative $r_i(j)$ of any machine $i$ such that $s_i \leq s_k$ and $i \in T$.
If $S = \emptyset$, then  Flex-Fit is allowed to take one of two options:
begin a new phase and then wait to process the next job, or assign $j$ to the representative $r_i(j)$
of any machine $i$ such that $i \in T$.  We need this flexibility in order to mimic the online algorithm
via a dynamic pricing scheme.  In particular, tie-breaking issues in our dynamic pricing scheme may arise,
where a job can be indifferent between choosing a machine of speed strictly less than $s_m$ (in which case
a new phase does not begin), or a machine of speed $s_m$ (in which case a new phase may possibly begin).
Since we do not have control over which machine agents choose, our online algorithm must be sufficiently flexible
and allow either of the two options to be taken.

Note that we assume job $1$ has $p_1 > 0$ (otherwise our initial estimate $\Lambda = 0$).  If this is not the case,
we simply wait until such a job $j$ arrives, and reindex jobs so that $j=1$.

To show the theorem, we first prove two lemmata, Lemma \ref{lem:phaseload} and Lemma \ref{lem:nonempty}.

Lemma \ref{lem:phaseload} says that, as long as $S \neq \emptyset$ when a particular job $j$ arrives, the load on the machine
to which $j$ is assigned is within a constant of our estimate $\Lambda$. We defer the proof of Lemma \ref{lem:phaseload} to Appendix \ref{app:proofs-related}.

\begin{lemma}\label{lem:phaseload}
For any job $j > 1$, if $S \neq \emptyset$, Flex-Fit always assigns job $j$ to a machine $i$ such that
$\hat{\ell}_i(j-1) + \frac{p_j}{s_i} \leq (2+\epsilon)\Lambda$.
\end{lemma}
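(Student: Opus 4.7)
The plan is to unwind the algorithm's behavior in the branch that is actually taken when $S\neq\emptyset$, and then use the definition of the representative to compare virtual loads. Since $S \subseteq T$, whenever $S\neq\emptyset$ we automatically have $T\neq\emptyset$, so Flex-Fit enters the outer \textbf{if} branch and, crucially, does \emph{not} trigger the ``optionally call New-Phase'' clause (which is gated on $S=\emptyset$). Consequently, $j$ is assigned to some machine $r^{*} = r_{i}(j)$ where $i$ is chosen from the set $\{i : s_i \leq s_k \text{ and } i \in T\}$, with $k$ being the minimum-index machine in $S$. This set is nonempty because $k$ itself lies in $S \subseteq T$ and satisfies $s_k \leq s_k$, so at worst $r^{*}=r_k(j)$ is a valid choice.

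The only remaining step is to translate the feasibility-like bound from $i$ to its representative $r^{*}$. By definition of $T$, the chosen index $i$ satisfies
\[
\hat{\ell}_i(j-1) + \frac{p_j}{s_i} \;\leq\; (2+\epsilon)\,\Lambda.
\]
By the definition of the representative, $r_i(j)$ is a machine of the \emph{same} speed as $i$ that minimizes the virtual load among all machines of speed $s_i$. Hence $s_{r^{*}} = s_i$ and $\hat{\ell}_{r^{*}}(j-1) \leq \hat{\ell}_i(j-1)$. Chaining these two facts gives
\[
\hat{\ell}_{r^{*}}(j-1) + \frac{p_j}{s_{r^{*}}} \;\leq\; \hat{\ell}_i(j-1) + \frac{p_j}{s_i} \;\leq\; (2+\epsilon)\,\Lambda,
\]
which is exactly the inequality the lemma asserts.

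Honestly, there isn't a real obstacle here: the statement is essentially an invariant read off directly from the rule defining the assignment in Flex-Fit, combined with the minimality property built into the representative. The only thing worth being careful about is confirming that the branching structure really does preclude a New-Phase call in the $S\neq\emptyset$ case, and that the set of candidate representatives is nonempty — both of which are immediate from the pseudocode. The substantive work (bounding total load across phases, geometric growth of $\Lambda$, and the comparison to $L^{*}$) will be done by subsequent lemmas such as Lemma~\ref{lem:nonempty}; this lemma just records the per-job load bound needed to feed into those arguments.
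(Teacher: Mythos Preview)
Your proof is correct and follows essentially the same approach as the paper: both argue that $S\neq\emptyset$ forces $T\neq\emptyset$, that the assigned machine is $r_i(j)$ for some $i\in T$, and then use the representative's minimality of virtual load together with $s_{r_i(j)}=s_i$ to transfer the $(2+\epsilon)\Lambda$ bound from $i$ to $r_i(j)$. Your write-up is, if anything, slightly more careful in explicitly checking that the New-Phase branch is not entered and that the candidate set is nonempty.
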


Lemma \ref{lem:nonempty} says that once our estimate $\Lambda$ is at least the optimal makespan, then no more new phases
are initiated (the proof is similar to the proof that Slow-Fit is competitive~\cite{AAFPW93}, although it must be
adapted appropriately). We defer the proof of Lemma \ref{lem:nonempty} to Appendix \ref{app:proofs-related}.

\begin{lemma}\label{lem:nonempty}
If $\Lambda \geq L^*$, then $S \neq \emptyset$ for all jobs $j > 1$.
\end{lemma}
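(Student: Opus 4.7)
The plan is to argue by contradiction. Suppose $\Lambda \ge L^*$ but some job $j > 1$ in the current phase has $S_j = \emptyset$, and take $j$ to be the earliest such job; let $j_0$ be the phase-initiating job (after New-Phase($j_0$), all virtual loads are zero). Define the fast-machine suffix $F = \{i : s_i \ge p_j/L^*\}$, which is nonempty because OPT schedules $j$ within makespan $L^*$. From $S_j = \emptyset$ and $p_j/s_i \le L^* \le \Lambda$ for $i \in F$, every $i \in F$ satisfies $\hat\ell_i(j-1) > L^*$; weighting by $s_i$ and summing,
\[
\sum_{i \in F} s_i\, \hat\ell_i(j-1) \;>\; L^*\sum_{i \in F} s_i, \qquad (\star)
\]
and the left-hand side equals the total processing time of current-phase jobs that Flex-Fit has placed on machines in $F$ before $j$.

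The contradiction will come from matching $(\star)$ with the reverse bound $\sum_{b:\, i_b \in F} p_b \le L^*\sum_{i \in F} s_i$. To obtain this, I would prove the set inclusion $\{b : j_0 < b < j,\ i_b \in F\} \subseteq \{b : i_b^* \in F\}$, where $i_b^*$ denotes OPT's machine for $b$; then OPT's makespan constraint $\sum_{b : i_b^* \in F} p_b \le L^*\sum_{i \in F} s_i$ finishes the job. For a current-phase job $b < j$ with $i_b \in F$, Flex-Fit's selection rule gives $s_{i_b} \le s_{k_b}$ and hence $k_b \in F$; if additionally $i_b^* \in S_b$, then $k_b \le i_b^*$ in index, so $s_{i_b^*} \ge s_{k_b} \ge p_j/L^*$ and $i_b^* \in F$ as required. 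The remaining task is therefore to prove $i_b^* \in S_b$ for every current-phase job $b < j$, which (using $p_b/s_{i_b^*} \le L^* \le \Lambda$) reduces to the pointwise bound $\hat\ell_{i_b^*}(b-1) \le 2\Lambda - L^*$.

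I would obtain this bound by a simultaneous induction on the jobs of the phase, maintaining the auxiliary invariant that for every suffix $A$ of machines, Flex-Fit's current-phase work on $A$ at time $b$ is at most OPT's current-phase work on $A$ at time $b$, and in particular at most $L^*\sum_{i \in A} s_i$. The base case is trivial (all virtual loads are zero after New-Phase), and the inductive step is preserved at step $b$ exactly when $i_b \in A \Rightarrow i_b^* \in A$, which is the implication of the previous paragraph; the two statements thus close via a single joint induction.

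The main obstacle is descending from the aggregate suffix invariant at time $b-1$ to the pointwise bound $\hat\ell_{i_b^*}(b-1) \le 2\Lambda - L^*$ at the specific machine $i_b^*$. Flex-Fit's representative mechanism is the crucial ingredient here: because each job within a chosen speed class is routed to a minimum-loaded representative, virtual load cannot concentrate on any individual machine more than needed, which allows the aggregate suffix bound to be descended to the singleton $\{i_b^*\}$. The extra slack coming from the $(2+\epsilon)\Lambda$ definition of $T_b$ (which lets Flex-Fit, in borderline cases, overshoot the $2\Lambda$ feasibility threshold) and the careful handling of equal-speed classes through representatives must also be tracked through the induction.
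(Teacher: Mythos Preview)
Your proposal contains a genuine gap at exactly the place you flag: descending from the aggregate suffix invariant to the pointwise bound $\hat{\ell}_{i_b^*}(b-1) \le 2\Lambda - L^*$. The representative mechanism does not do this for you. Representatives balance load \emph{within} a speed class, so they let you replace $i_b^*$ by the least-loaded machine of the same speed; but you still need a pointwise upper bound on that machine's virtual load, and an aggregate bound of the form $\sum_{i\in A} s_i\hat\ell_i \le L^*\sum_{i\in A} s_i$ over a speed-suffix $A$ is perfectly compatible with a single machine in $A$ carrying load close to $(2+\epsilon)\Lambda$. So the joint induction does not close: you need $i_b^*\in S_b$ to maintain the invariant, and you need the invariant (in a pointwise form it does not provide) to get $i_b^*\in S_b$.

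The paper's argument avoids this difficulty by choosing the ``fast'' set differently. Instead of your $F=\{i: s_i \ge p_j/L^*\}$, it takes $\Gamma=\{i: s_i > s_f\}$ where $f$ is the \emph{fastest} machine with $\hat\ell_f(j-1)\le L^*$. This definition is tailored to the loads, not to the job $j$, and it buys two things: (i) every machine in $\Gamma$ has virtual load $>L^*$ at time $j-1$, so the same volume inequality $(\star)$ holds with $\Gamma$ in place of $F$; and (ii) by monotonicity of virtual loads within a phase, the single witness machine $f$ satisfies $\hat\ell_f(b-1)\le L^*$ for \emph{every} earlier $b$. Now one does \emph{not} try to prove the set inclusion for all phase jobs. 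The volume inequality alone guarantees one job $b$ that Flex-Fit placed in $\Gamma$ but OPT placed outside $\Gamma$; for that single $b$ one has $p_b/s_f\le p_b/s_{i_b^*}\le L^*$ and $\hat\ell_f(b-1)\le L^*$, so $f\in S_b$, forcing $s_{i_b}\le s_{k_b}\le s_f$. This contradicts $i_b\in\Gamma$. The crucial missing idea in your outline is precisely this: anchor the argument at a specific low-load machine $f$ chosen from the loads at time $j-1$, rather than trying to establish $i_b^*\in S_b$ for all $b$ via an invariant.
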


We now conclude the proof of the main theorem in this section.
\begin{theorem}\label{thm:ff}
Flex-Fit is $O(1)$-competitive for the makespan minimization problem on related machines.
\end{theorem}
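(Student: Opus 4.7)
The plan is to combine the two lemmas into a two-step bound: first control the final value of the estimate $\Lambda$ in terms of $L^*$, then control the real load on any machine in terms of that final estimate.

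For the first step, let $\Lambda_f$ denote the value of $\Lambda$ after the last call to New-Phase, and let $\Lambda_{\mathrm{prev}}$ be its value immediately before that call. By Lemma~\ref{lem:nonempty}, whenever $\Lambda \geq L^*$ we have $S \neq \emptyset$, and since $S \subseteq T$ this forces $T \neq \emptyset$ as well. Hence neither the mandatory branch ($T=\emptyset$) nor the optional branch ($T\neq\emptyset$, $S=\emptyset$) of Flex-Fit can trigger New-Phase once $\Lambda \geq L^*$. Consequently $\Lambda_{\mathrm{prev}} < L^*$. Inspecting the update rule in Algorithm~\ref{alg:phase}, the multiplier $\max\{2, 2^{\lceil \log_2 p_j/(s_m\Lambda_{\mathrm{prev}})\rceil}\}$ is at most $2\max\{1, p_j/(s_m\Lambda_{\mathrm{prev}})\}$, so $\Lambda_f \leq 2\max\{\Lambda_{\mathrm{prev}}, p_j/s_m\}$. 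Since any solution (including the optimum) must process job $j$ on some machine, $p_j/s_m \leq L^*$, and therefore $\Lambda_f \leq 2L^*$.

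For the second step, index the phases $1,\dots,K$ with estimates $\Lambda_1 < \Lambda_2 < \cdots < \Lambda_K = \Lambda_f$, where consecutive $\Lambda_t$'s differ by at least a factor of $2$. I will argue that the real load on any fixed machine $i$ is at most a constant times $\sum_t \Lambda_t$. Within phase $t$, every job assigned to machine $i$ goes onto some machine of $T$, so by the definition of $T$ (and in the generic case by Lemma~\ref{lem:phaseload}) the virtual load satisfies $\hat{\ell}_i \leq (2+\epsilon)\Lambda_t$ at all times during phase $t$. The single job $j$ whose arrival triggers New-Phase is assigned to a fastest machine and contributes $p_j/s_m$ to that machine's real load; by the update rule, the new estimate $\Lambda_{t+1}$ satisfies $\Lambda_{t+1} \geq p_j/s_m$, so this contribution is also at most $\Lambda_{t+1}$. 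Summing these bounds across all phases on machine $i$ gives a real load of at most $(2+\epsilon)\sum_t \Lambda_t + \sum_t \Lambda_t$. Since $\Lambda_t \leq \Lambda_f/2^{K-t}$, the geometric series gives $\sum_t \Lambda_t \leq 2\Lambda_f \leq 4L^*$, yielding a makespan bound of $O(L^*)$.

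The main obstacle is the bookkeeping for the jobs that open each phase via New-Phase: these jobs are assigned to a fastest machine and their contribution is not captured by the virtual-load invariant of Lemma~\ref{lem:phaseload}, which applies only to the in-phase assignments. Handling them requires the observation, extracted from Algorithm~\ref{alg:phase}, that after the update $\Lambda$ dominates $p_j/s_m$, so their contributions can be absorbed into the same geometric sum. A secondary subtlety is that Flex-Fit's flexibility in the case $T\neq\emptyset$, $S=\emptyset$ allows in-phase assignment to a machine in $T\setminus S$, producing a load up to $(2+\epsilon)\Lambda_t$ rather than $2\Lambda_t$; this is exactly why the $(2+\epsilon)$ factor appears in the final constant, and it is handled uniformly by using $T$ rather than $S$ in the in-phase invariant.
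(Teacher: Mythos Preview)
Your proposal is correct and follows essentially the same approach as the paper: bound the final estimate by $2L^*$ via Lemma~\ref{lem:nonempty}, bound each phase's in-phase contribution by $(2+\epsilon)\Lambda_t$ via Lemma~\ref{lem:phaseload} (extended to the $S=\emptyset$, $T\neq\emptyset$ case using the definition of $T$), charge the phase-opening job to the new estimate, and sum the resulting geometric series. The only cosmetic difference is that the paper explicitly separates the one-phase case and indexes the opening job of phase $h$ by $\Lambda_h$ rather than $\Lambda_{t+1}$, but the arithmetic and final constant $4(3+\epsilon)$ coincide.
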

\begin{proof}
Observe that the final makespan of the online algorithm is at most (a) the sum of the maximal loads over all phases,
plus (b) any additional processing times incurred due to explicitly assigning jobs to some fastest
machine. That is, (b) refers to the very first job, along with all jobs that cause a phase to end.

We let $\Lambda_1 = \frac{p_1}{s_m}$ (i.e., the
online algorithm's first estimate once job $1$ arrives), and in general define $\Lambda_h$ to be the value
of $\Lambda$ during phase $h \geq 1$.  Observe that the estimate $\Lambda_h$ is always of the form $2^{a_h} \cdot \Lambda_1$
for some integer $a_h \in \{0,1,2,\ldots\}$ (i.e., $\frac{\Lambda_h}{\Lambda_1}$ is always a power of $2$ for every phase $h$).  Moreover,
we have $\Lambda_{h+1} \geq 2 \cdot \Lambda_h$ (since its new value is always its old value,
multiplied by $\max\{2, 2^{\lceil \log_2 \frac{p_j}{s_m \Lambda_h}\rceil}\}$, where $j$ is the current job
that caused a new phase to begin).  Hence, $a_h < a_{h+1}$.

Let $k$ be the number of phases, where $k \geq 1$.  We first show that $\Lambda_k \leq 2 \cdot L^*$,
where $\Lambda_k$ denotes the value of $\Lambda$ during the final phase.
If there is only one phase, we
are done since the final estimate is $\Lambda_k = \Lambda_1 = \frac{p_1}{s_m} \leq L^* \leq 2 \cdot L^*$.  Now,
suppose there are $k \geq 2$ phases.  Consider the $(k-1)^{st}$ phase (i.e., the phase just before the last phase).
The reason why the algorithm ended phase $k-1$ and started phase $k$ is because the set $S$ was empty when some job
arrived.  By Lemma~\ref{lem:nonempty},
it must have been the case that the estimate $\Lambda_{k-1}$ was strictly less than $L^*$.  There are two cases:
$\Lambda_k = 2 \cdot \Lambda_{k-1}$ or $\Lambda_k = 2^{\lceil \log_2 \frac{p_j}{s_m \Lambda}\rceil} \cdot \Lambda_{k-1}$
(where $j$ is the job that caused phase $k$ to begin).
In the first case, we have $\Lambda_k = 2 \cdot \Lambda_{k-1} < 2 \cdot L^*$.  In the second case, we also have
$\Lambda_k = 2^{\lceil \log_2 \frac{p_j}{s_m \Lambda_{k-1}}\rceil} \cdot \Lambda_{k-1} \leq
2 \cdot \frac{p_j}{s_m \Lambda_{k-1}} \cdot \Lambda_{k-1} \leq 2 \cdot L^*$ (since $\frac{p_j}{s_m} \leq L^*$).

By Lemma~\ref{lem:phaseload}, the total load that accumulates during a phase $h$ (i.e., while $T \neq \emptyset$, or
possibly until $S = \emptyset$) is at most $(2 + \epsilon) \Lambda_h$.  In addition, the algorithm also assigns
jobs to a machine with the fastest speed (e.g., job $1$, and all jobs
that cause a new phase to begin).  Such jobs $j$ add at most $\frac{p_j}{s_m}$ to the makespan.  Note that the job $j$ which
causes some phase $h$ to begin satisfies
$\frac{p_j}{s_m} \leq 2^{\lceil \log_2 \frac{p_j}{s_m \Lambda_{h-1}}\rceil} \cdot \Lambda_{h-1} \leq \Lambda_h$ (recall
that $\frac{p_1}{s_m} = \Lambda_1$ by definition).
Hence, each phase $h$ incurs a makespan of at most $\Lambda_h + (2+\epsilon)\Lambda_h$.  Let $a_h \in \{0,1,2,\ldots\}$
be the integer such that $\Lambda_h = 2^{a_h}\cdot\Lambda_1$.  In all, the final makespan
is at most:
$$\sum_{h=1}^k(3+\epsilon)\Lambda_h = (3+\epsilon)\Lambda_1\sum_{h=1}^k2^{a_h} \leq (3+\epsilon)\Lambda_1\cdot2^{a_k + 1}
= 2\cdot(3+\epsilon)\cdot2^{a_k}\Lambda_1 \leq 4\cdot(3+\epsilon)\cdot L^*.  $$
This concludes the proof of the theorem.
\end{proof}

\subsection{Dynamic Pricing Scheme}

\begin{algorithm}
\DontPrintSemicolon
$\pi_{m1} \leftarrow 0$, $\pi_{i1} \leftarrow \infty$ for all $i < m$\;
After agent $1$ chooses a machine $q$: \;
\Indp
	$\Lambda \leftarrow \frac{p_1}{s_q}$\;
	$\hat{\ell}_{i}(1) \leftarrow 0$ for all $i$\;
\Indm
\While{job $j$ arrives} {
	$\mu_i(j) \leftarrow s_i(2\cdot \Lambda - \hat{\ell}_i(j-1))$\;
	Sort the values $\mu_i(j)$ in ascending order so that $\mu_{i_1}(j) \leq \mu_{i_2}(j) \leq \cdots \leq \mu_{i_m}(j)$ (breaking
ties arbitrarily)\;
	$A \leftarrow [m]$, $B = \emptyset$\;
	\While{$A \neq \emptyset$} {
		$s \leftarrow \min\{s_i : i \in A\}$\;
		$w \leftarrow \max\{a : s_{i_a} = s\}$\;
		$B \leftarrow B \cup \{i_w\}$\;
		$A \leftarrow A \setminus \{i_a : a \leq w\}$\;
	}
	Index the elements in $B$ such that $t_1 < t_2 < \cdots < t_{|B|}$, where $t_1$ is the smallest element, $t_2$ is the second smallest, etc.\;
	$\pi_{r_{t_1}(j)j} = 0$\;
	\For{$b=2$ to $|B|$} {
		$\pi_{r_{t_b}(j)j} = \ell_{r_{t_{b-1}}(j)}(j-1) - \ell_{r_{t_b}(j)}(j-1) +
		\left(1 - \frac{s_{t_{b-1}}}{s_{t_b}}\right)((2+\epsilon)\Lambda - \hat{\ell}_{t_{b-1}}(j-1)) + \pi_{r_{t_{b-1}}(j)j}$\;
	}
 	\If{$s_m \neq s_{t_{|B|}}$} {
		$\pi_{r_m(j)j} = \ell_{r_{t_{|B|}}(j)}(j-1) - \ell_{r_m(j)}(j-1) +
		\left(1 - \frac{s_{t_{|B|}}}{s_{m}}\right)((2+\epsilon)\Lambda - \hat{\ell}_{t_{|B|}(j)}(j-1)) + \pi_{r_{t_{|B|}}(j)j}$\;
	}
	Set all other prices to $\infty$\;
	After agent $j$ chooses a machine $q$: \;
	\Indp
		\If{$s_q = s_m$ \upshape{and} $S = \{i : \hat{\ell}_i(j-1) + \frac{p_j}{s_i} \leq 2 \cdot \Lambda\} = \emptyset$} {
			$\Lambda \leftarrow \max\{2, 2^{\left\lceil \log_2 \frac{p_j}{s_m \Lambda}\right\rceil}\} \cdot \Lambda$ \;
			$\hat{\ell}_i(j) \leftarrow 0$ for all $i$\;
		}
		\Else {
			$\hat{\ell}_q(j) \leftarrow \hat{\ell}_q(j-1) + \frac{p_j}{s_q}$\;
			$\hat{\ell}_i(j) \leftarrow \hat{\ell}_i(j-1)$ for all $i \neq q$\;
		}
	\Indm	
}
\caption{Dynamic-Related: A Dynamic Pricing Scheme for Related Machines\label{alg:dyrel}}
\end{algorithm}

We now give our dynamic pricing scheme, Dynamic-Related, which mimics the behavior of Flex-Fit. That is, the dynamic prices set
by Dynamic-Related (prior to the arrival of each job) have the following property.  Any incoming rational job $j$ will choose some machine $k$
such that Flex-Fit is free to assign job $j$ to machine $k$.

 As before,
assume that machines are sorted such that $s_1 \leq s_2 \leq \cdots \leq s_m$.  We now give some intuition behind the dynamic pricing algorithm
and discuss some of its properties.  Initially, the algorithm sets prices so that the first job is incentivized to choose machine $m$
(i.e., a machine with the fastest speed), and then obtains an initial estimate $\Lambda$ (in a manner similar to Flex-Fit).
The algorithm then enters an outer while loop which is responsible for setting prices before each next job $j$ arrives.
It begins by sorting the values $\mu_i(j) = s_i(2\cdot \Lambda - \hat{\ell}_i(j-1))$ so that $\mu_{i_1}(j) \leq \cdots \leq \mu_{i_m}(j)$.
This is useful since a job is feasible on machine $i \Leftrightarrow \hat{\ell}_i(j-1) + \frac{p_j}{s_i} \leq 2 \cdot \Lambda \Leftrightarrow p_j \leq \mu_i(j)$.

The inner while loop in Dynamic-Related is responsible for obtaining a carefully selected subsequence of machines
with strictly increasing speeds.  In particular, the inner while loop constructs a set $B$ with the property that
for all $1 \leq b \leq |B|$, we have $s_{t_1} < s_{t_2} < \cdots < s_{t_{|B|}}$ (after renaming the machine indices
in $B$ by $t_1 < t_2 < \cdots < t_{|B|}$).
This property holds due to the process by which the set $B$ is created.  Note that machine $t_1$ has the same speed
as the slowest machine (i.e., $s_1 = s_{t_1}$), since the first machine added to set $B$ is the machine in the rightmost position
in the sorted ordering $\mu_{i_1}(j) \leq \cdots \leq \mu_{i_m}(j)$ satisfying the property that it has the same speed as the slowest machine (i.e.,
machine $1$'s speed).  The inner while loop then removes all machines that appear earlier in the sorted ordering than machine $t_1$ from set $A$
(including machine $t_1$ itself).  This implies that all machines of speed $s_{t_1}$  (i.e., the slowest machine speed)
are removed from $A$, and hence the slowest machine remaining in set $A$ must have strictly larger speed.  Therefore,
when adding the second machine to set $B$, namely machine $t_2$, we get that the speed of $t_2$ satisfies $s_{t_2} > s_{t_1}$.
Repeating this process, we get that $s_{t_1} < s_{t_2} < \cdots < s_{t_{|B|}}$.
In addition, we have the property that $\mu_{t_1}(j) \leq \cdots \leq \mu_{t_{|B|}}(j)$, since each time we add a
machine to set $B$ that appears later in the sorted order $\mu_{i_1}(j) \leq \mu_{i_2}(j) \leq \cdots \leq \mu_{i_m}(j)$.

The for loop is essentially responsible for determining which machines are given finite prices (in addition to computing
the actual prices).  Note that the algorithm only sets finite prices on representatives of machines, namely the representatives of
machines $t_1,\ldots,t_{|B|}$ (possibly in addition to the representative
of machine $m$).  Hence, jobs can only be assigned to the representatives of machines $t_1,\ldots,t_{|B|}$,
and possibly the representative of machine $m$.  Dynamic-Related then updates $\Lambda$ as necessary
(in a manner similar to Flex-Fit).

Our goal now is to prove  that the prices determined by Dynamic-Related are such that for any sequence of rational
selfish jobs, the jobs will choose machines that are consistent with the choices available to Flex-Fit. This is the
same as saying that the competitive ratio of the Dynamic-Related dynamic pricing scheme is the same as the competitive
ratio of the Flex-Fit algorithm, which we know is $O(1)$ by Theorem \ref{thm:ff}. To summarize, we now seek to prove
the following theorem:

\begin{theorem}\label{thm:rel}
Dynamic-Related is an $O(1)$-competitive dynamic pricing scheme for the makespan minimization problem on related machines.
\end{theorem}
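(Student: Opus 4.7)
The plan is to reduce Theorem \ref{thm:rel} to Theorem \ref{thm:ff} by exhibiting, for every job sequence, a valid execution of Flex-Fit that produces the same schedule as Dynamic-Related. I will maintain the invariant that, after processing jobs $1, \ldots, j-1$, the two algorithms agree on all loads $\ell_i(\cdot)$, virtual loads $\hat{\ell}_i(\cdot)$, and the estimate $\Lambda$. The inductive step is to show that every rational selfish choice under the prices posted by Dynamic-Related lands on a machine that Flex-Fit is permitted to assign $j$ to, and that the subsequent phase transition (if any) matches one of Flex-Fit's options.

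The first technical step is to analyze the set $B$ constructed by the inner while loop. Each iteration selects the rightmost machine in the $\mu$-sorted order among those of currently minimum speed in $A$, then deletes all earlier machines. This yields $B = \{t_1 < \cdots < t_{|B|}\}$ with strictly increasing speeds $s_{t_1} < \cdots < s_{t_{|B|}}$ and non-decreasing values $\mu_{t_1}(j) \leq \cdots \leq \mu_{t_{|B|}}(j)$; moreover, since $t_b$ attains the largest $\mu$-value among machines of speed $s_{t_b}$, it also has the smallest virtual load among them, and is therefore a valid representative of itself. The second step is to unfold the recursive price formula and compute the cost difference between consecutive representatives, obtaining
\[
c_{r_{t_b}(j)} - c_{r_{t_{b-1}}(j)} = \frac{s_{t_b}-s_{t_{b-1}}}{s_{t_{b-1}}\,s_{t_b}}\bigl(s_{t_{b-1}}\bigl((2+\epsilon)\Lambda - \hat{\ell}_{t_{b-1}}(j-1)\bigr) - p_j\bigr),
\]
and the same identity holds between $t_{|B|}$ and $r_m(j)$ when $s_m > s_{t_{|B|}}$. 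Since $s_{t_b} > s_{t_{b-1}}$, the job weakly prefers the slower representative $r_{t_{b-1}}(j)$ if and only if $t_{b-1} \in T$. Writing $\tau_i := s_i\bigl((2+\epsilon)\Lambda - \hat{\ell}_i(j-1)\bigr)$ so that $\tau_{t_b} = \mu_{t_b}(j) + \epsilon\Lambda\, s_{t_b}$, the monotonicity of both $\mu_{t_b}$ and $s_{t_b}$ makes $\{b : t_b \in T\}$ an upward-closed suffix of $\{1, \ldots, |B|\}$. Combining these facts, the rational job picks the slowest machine in $B \cup \{m\}$ whose $\tau$ dominates $p_j$, or $r_m(j)$ when no such machine exists; all other machines carry price $\infty$ and are never chosen.

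The main obstacle is then to verify that this rational choice always coincides with some assignment Flex-Fit is allowed to make. Let $k = \min S$ when $S \neq \emptyset$, and $k = m$ otherwise; I must verify that the index $t_{b^*}$ picked by the job satisfies $s_{t_{b^*}} \leq s_k$ and $t_{b^*} \in T$. The key observation is that the iteration of the inner while loop which deletes $k$ from $A$ simultaneously adds some $t_b \in B$ with $s_{t_b} \leq s_k$ and $\mu_{t_b}(j) \geq \mu_k(j) \geq p_j$, so that $t_b \in S \subseteq T$; consequently $b^* \leq b$ and $s_{t_{b^*}} \leq s_{t_b} \leq s_k$, as required. Finally I will match phase transitions: if $T = \emptyset$ the job is forced onto $r_m(j)$ and Dynamic-Related starts a new phase, agreeing with Flex-Fit's mandatory new phase; if $S = \emptyset$ and $T \neq \emptyset$, Flex-Fit may either stay or start a new phase, and Dynamic-Related starts a new phase precisely when the chosen representative has speed $s_m$, which is one of Flex-Fit's permitted options. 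Having inductively aligned assignments, virtual-load updates, and $\Lambda$-updates step by step, the makespan produced by Dynamic-Related equals that of a valid run of Flex-Fit, and Theorem~\ref{thm:ff} completes the proof.
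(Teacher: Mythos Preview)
Your proposal is correct and follows essentially the same approach as the paper: an inductive simulation argument showing that every selfish choice under Dynamic-Related's prices is a permissible assignment for Flex-Fit, built on the same pairwise cost identity (the paper's Lemma~\ref{lem:prices}) and the same three-way case split on $T$ and $S$ (the paper's Lemmas~\ref{lem:Tempty}--\ref{lem:Snempty}). The one place you are slightly looser than the paper is tie-breaking---when $p_j=\tau_{t_{b^*}}$ exactly, the job may select $r_{t_{b^*+1}}$ rather than $r_{t_{b^*}}$---but your ``key observation'' still covers this, since then $\mu_{t_{b^*}}<p_j$ forces your witness index $b$ to satisfy $b>b^*$, whence $s_{t_{b^*+1}}\le s_{t_b}\le s_k$ as needed.
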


We use $FF$ and $DR$ as shorthand when referring to Flex-Fit and Dynamic-Related, respectively.
The plan for the proof is to argue that the behavior of $DR$ is consistent with the behavior of
$FF$.  Note that, at certain points, $FF$ is free to assign a job to one of  multiple
machines. Additionally, $FF$ may be free to start a new phase (or not). Hence, by the phrase ``$DR$ {\sl behaves consistently with}
$FF$,'' we mean that $DR$ sets prices so that any rational job to arrive will choose one of the permissible actions that $FF$ is free to take.

Throughout the proof, we make use of notation present in both algorithms.  In addition, when referring to the representative of a machine
$i$ when job $j$ arrives, we write $r_i$ instead of $r_i(j)$ and $\mu_i$ instead of $\mu_i(j)$ for ease of notation (we always refer to the current job $j$).
Recall that $T = \{i : \hat{\ell}_i(j-1) + \frac{p_j}{s_i} \leq (2+\epsilon)\Lambda\}$,
$S = \{i : \hat{\ell}_i(j-1) + \frac{p_j}{s_i} \leq 2 \cdot \Lambda\}$, and $k$ is the minimum
machine index in $S$ ($k = m$ if $S = \emptyset$).

Theorem~\ref{thm:rel} follows from the following four lemmas, the first of which explores some properties
of the prices $DR$ assigns, while the remaining three argue that $DR$ behaves
consistently with $FF$ in three disjoint, exhaustive cases.
The proofs of the following lemmas are deferred to Appendix~\ref{app:proofs-related}.

\begin{lemma}\label{lem:prices}
If $|B| \geq 2$, then for all $1 \leq b \leq |B| - 1$, we have $c_{r_{t_b}(j)j} \leq c_{r_{t_{b+1}}(j)j}$
if and only if $t_b \in T$.  Similarly, if $s_{t_{|B|}} \neq s_m$, then $c_{r_{t_{|B|}}(j)j} \leq c_{r_m(j)j}$
if and only if $t_{|B|} \in T$.
\end{lemma}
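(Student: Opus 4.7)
The plan is to directly compute the difference $c_{r_{t_{b+1}}(j)j} - c_{r_{t_b}(j)j}$ using the definition of cost and the price formula in Dynamic-Related, and to show that this difference has the same sign as the quantity $(2+\epsilon)\Lambda - \hat{\ell}_{t_b}(j-1) - \frac{p_j}{s_{t_b}}$, which is nonnegative exactly when $t_b \in T$.

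First I would recall that for any machine $i$ we have $s_{r_i(j)} = s_i$ by the definition of representative, so the processing time of job $j$ on the representative of $t_b$ is $\frac{p_j}{s_{t_b}}$. Writing out the two costs,
\[
c_{r_{t_b}(j)j} = \ell_{r_{t_b}(j)}(j-1) + \tfrac{p_j}{s_{t_b}} + \pi_{r_{t_b}(j)j},
\qquad
c_{r_{t_{b+1}}(j)j} = \ell_{r_{t_{b+1}}(j)}(j-1) + \tfrac{p_j}{s_{t_{b+1}}} + \pi_{r_{t_{b+1}}(j)j},
\]
and plugging in the recursive definition of $\pi_{r_{t_{b+1}}(j)j}$ from Algorithm~\ref{alg:dyrel}, the $\ell_{r_{t_{b+1}}(j)}(j-1)$ and $\pi_{r_{t_b}(j)j}$ terms cancel nicely, leaving
\[
c_{r_{t_{b+1}}(j)j} - c_{r_{t_b}(j)j}
= \tfrac{p_j}{s_{t_{b+1}}} - \tfrac{p_j}{s_{t_b}} + \Bigl(1 - \tfrac{s_{t_b}}{s_{t_{b+1}}}\Bigr)\bigl((2+\epsilon)\Lambda - \hat{\ell}_{t_b}(j-1)\bigr).
\]

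Next I would factor. Using $\tfrac{p_j}{s_{t_{b+1}}} - \tfrac{p_j}{s_{t_b}} = -\tfrac{s_{t_{b+1}} - s_{t_b}}{s_{t_{b+1}}} \cdot \tfrac{p_j}{s_{t_b}}$, the right-hand side becomes
\[
\frac{s_{t_{b+1}} - s_{t_b}}{s_{t_{b+1}}} \left((2+\epsilon)\Lambda - \hat{\ell}_{t_b}(j-1) - \frac{p_j}{s_{t_b}}\right).
\]
The construction of $B$ in the inner while loop guarantees $s_{t_b} < s_{t_{b+1}}$ strictly, so the prefactor is strictly positive. Therefore $c_{r_{t_b}(j)j} \leq c_{r_{t_{b+1}}(j)j}$ if and only if $\hat{\ell}_{t_b}(j-1) + \frac{p_j}{s_{t_b}} \leq (2+\epsilon)\Lambda$, which is exactly the condition $t_b \in T$.

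Finally, for the case $s_{t_{|B|}} \neq s_m$, the exact same algebra applies with $t_{b+1}$ replaced by $m$: the price $\pi_{r_m(j)j}$ is defined by the identical formula with $t_{b-1}$ replaced by $t_{|B|}$, and since the machines in $B$ have strictly increasing speeds with $s_{t_{|B|}} < s_m$ in this case, the factor $\frac{s_m - s_{t_{|B|}}}{s_m}$ is again strictly positive, yielding $c_{r_{t_{|B|}}(j)j} \leq c_{r_m(j)j} \iff t_{|B|} \in T$. The main ``work'' is only bookkeeping of the representative indices and verifying the cancellation; no nontrivial obstacle is anticipated.
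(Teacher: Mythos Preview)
Your proposal is correct and follows essentially the same approach as the paper's proof: both substitute the recursive price formula into the cost difference, observe the telescoping cancellation of the $\ell_{r_{t_b}}$ and $\pi_{r_{t_b}j}$ terms, and then use the strict inequality $s_{t_b} < s_{t_{b+1}}$ (guaranteed by the construction of $B$) to conclude that the sign of the difference is governed by the membership condition $t_b \in T$. The only cosmetic distinction is that you factor out the positive prefactor $\tfrac{s_{t_{b+1}}-s_{t_b}}{s_{t_{b+1}}}$ from the cost difference, whereas the paper rearranges into an inequality and divides by $\bigl(\tfrac{1}{s_{t_b}} - \tfrac{1}{s_{t_{b+1}}}\bigr)$; the two computations are equivalent.
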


We now argue that $DR$ behaves consistently with $FF$ for every fixed job $j$.  Clearly, for $j=1$,
both $FF$ and $DR$ behave in the same manner (including updating $\Lambda$).  Now, consider
any job $j > 1$.  We split the proof of up into three disjoint and exhaustive cases, classified as follows:
set $T$ is empty (Lemma~\ref{lem:Tempty}), set $T$ is nonempty but set $S$ is empty (Lemma~\ref{lem:TnemptySempty}),
and finally set $S$ is nonempty (Lemma~\ref{lem:Snempty}).
Note that we always have $S \subseteq T$, so it is not possible for $T$ to be empty while $S$ is nonempty.

\begin{lemma}\label{lem:Tempty}
If $T = \emptyset$, then $DR$ assigns prices so that a rational job $j$ chooses a machine of speed $s_m$. $DR$ also updates
$\Lambda$, and resets virtual loads to zero (precisely in the same manner as $FF$).
\end{lemma}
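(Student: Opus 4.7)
The plan is to establish two things: (i) the prices posted by Dynamic-Related (hereafter $DR$) force any rational job $j$ to choose a machine of speed $s_m$, and (ii) $DR$'s post-assignment code then performs exactly the $\Lambda$-update and virtual-load reset of Flex-Fit's (hereafter $FF$) New-Phase routine. Together these show that $DR$ behaves consistently with $FF$ in the $T=\emptyset$ case.

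For (i), I would first recall that $DR$ assigns finite prices only to the representatives $r_{t_1}(j),\ldots,r_{t_{|B|}}(j)$ and, when $s_{t_{|B|}}\neq s_m$, also to $r_m(j)$; every other machine receives price $\infty$. A rational agent therefore chooses one of these finitely many candidates. The crux of the argument is to invoke Lemma~\ref{lem:prices}: because $T=\emptyset$, every $t_b$ lies outside $T$, so the ``if and only if'' yields strict inequalities in the reverse direction---namely $c_{r_{t_b}(j)j}>c_{r_{t_{b+1}}(j)j}$ for all $b<|B|$, together with $c_{r_{t_{|B|}}(j)j}>c_{r_m(j)j}$ whenever $s_{t_{|B|}}\neq s_m$. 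The finite-cost machines thus form a strictly decreasing chain whose unique minimum is a representative of speed $s_m$: $r_{t_{|B|}}(j)$ when $s_{t_{|B|}}=s_m$, and $r_m(j)$ otherwise. The rational agent $j$ therefore picks some $q$ with $s_q=s_m$, matching an admissible choice of $FF$ inside New-Phase.

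For (ii), once the agent has chosen such a $q$, $DR$ enters the post-assignment branch guarded by ``$s_q=s_m$ \emph{and} $S=\emptyset$.'' We have $s_q=s_m$ from (i), and since $S\subseteq T=\emptyset$ we also have $S=\emptyset$, so both guards hold. $DR$ then applies $\Lambda \leftarrow \max\{2,\,2^{\lceil \log_2(p_j/(s_m\Lambda))\rceil}\}\cdot\Lambda$ and resets every $\hat{\ell}_i(j)$ to zero---precisely the New-Phase update of $FF$. The main obstacle is packaged inside Lemma~\ref{lem:prices}, whose price recurrence must correctly translate ``feasibility of $t_b$'' into ``$r_{t_b}(j)$ is at least as cheap as $r_{t_{b+1}}(j)$'' across machines of strictly increasing speed; once that lemma is available, only a brief case split on whether $s_{t_{|B|}}=s_m$ (and the mild edge case $|B|=1$, absorbed by the second clause of Lemma~\ref{lem:prices}) remains.
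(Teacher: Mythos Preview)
Your proposal is correct and follows essentially the same approach as the paper: both invoke Lemma~\ref{lem:prices} to show that, since no $t_b$ lies in $T$, the costs on the finitely-priced representatives form a strictly decreasing chain ending at a speed-$s_m$ machine, then split on whether $s_{t_{|B|}}=s_m$ and note the $|B|=1$ corner case, and finally observe that $S\subseteq T=\emptyset$ triggers the New-Phase update. One small imprecision: you write that the edge case $|B|=1$ is ``absorbed by the second clause of Lemma~\ref{lem:prices},'' but when $|B|=1$ \emph{and} $s_{t_1}=s_m$ neither clause of that lemma applies; the conclusion there is instead trivial because $r_{t_1}(j)=r_m(j)$ is the sole machine with a finite price.
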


\begin{lemma}\label{lem:TnemptySempty}
If $T \neq \emptyset$ and $S = \emptyset$, then $DR$ processes job $j$ in one of two ways.
$DR$ either (a) Sets prices so that a rational job $j$ chooses a machine of speed $s_m$, updates $\Lambda$, and resets virtual loads to zero,
or (b) $DR$ sets prices so that a rational job $j$ chooses some representative $r_i$ where $i \in T$.  Both options (a) and (b) are consistent with $FF$.
\end{lemma}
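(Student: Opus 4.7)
The plan is to analyze which machine a rational job $j$ will choose under the prices set by Dynamic-Related (DR), and then verify that this choice always corresponds to one of the two actions that Flex-Fit (FF) is permitted to take in the case $T \neq \emptyset$, $S = \emptyset$. First I would observe that DR assigns finite prices only to the representatives $r_{t_1}(j), \ldots, r_{t_{|B|}}(j)$, together with $r_m(j)$ when $s_{t_{|B|}} \neq s_m$; every other machine has price $\infty$. Hence any rational agent's minimum-cost machine $q$ must lie in this candidate set $C$.

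Next I would apply Lemma~\ref{lem:prices}, which provides exactly the adjacent cost comparisons along the sequence $r_{t_1}(j), \ldots, r_{t_{|B|}}(j), r_m(j)$: the inequality $c_{r_{t_b}(j)j} \leq c_{r_{t_{b+1}}(j)j}$ holds iff $t_b \in T$, with the analogous statement for the last pair when $s_{t_{|B|}} \neq s_m$. Since $q$ attains the minimum cost in $C$, its cost is at most that of the machine immediately following it in this sequence, so this adjacent-pair criterion will let me conclude membership in $T$ whenever needed.

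The main body of the proof splits into two exhaustive cases. In case (a), $s_q = s_m$; I would invoke the final conditional in Algorithm~\ref{alg:dyrel}: because $S = \emptyset$ by hypothesis, both trigger conditions hold, so DR updates $\Lambda$ by exactly the New-Phase rule and resets all virtual loads to zero. This reproduces option (a), which is one of the permissible FF actions when $T \neq \emptyset$, $S = \emptyset$. In case (b), $s_q < s_m$, so $q = r_{t_b}(j)$ for some $b \leq |B|$ with $s_{t_b} < s_m$. If $b < |B|$, the minimality of $q$'s cost against $r_{t_{b+1}}(j)$ combined with Lemma~\ref{lem:prices} immediately gives $t_b \in T$. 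If $b = |B|$, then necessarily $s_{t_{|B|}} < s_m$, so $r_m(j)$ lies in $C$ with a finite price, and the analogous comparison yields $t_{|B|} \in T$. Either way the job is assigned to a representative $r_{t_b}(j)$ with $t_b \in T$, matching FF's option to assign $j$ to some $r_i(j)$ with $s_i \leq s_k$ and $i \in T$ (here $k = m$ since $S = \emptyset$, so the constraint $s_i \leq s_k$ is vacuous).

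The hard part, beyond invoking Lemma~\ref{lem:prices}, will be the edge-case bookkeeping. I would need to verify that when $s_{t_{|B|}} = s_m$ the representatives $r_{t_{|B|}}(j)$ and $r_m(j)$ coincide (both being a minimum-virtual-load machine among those of speed $s_m$), so that no separate price for $r_m(j)$ is required and no ambiguity arises; that DR's triggering condition ``$s_q = s_m$ and $S = \emptyset$'' precisely mirrors FF's invocation of New-Phase; and that any tie-break chosen by the rational agent still falls into one of (a) or (b). Since the case analysis exhausts every possible minimum-cost choice, the conclusion will hold uniformly over all rational tie-breaking rules, which is what is needed for the claim ``sets prices so that a rational job $j$ chooses\ldots''.
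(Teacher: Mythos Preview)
Your proposal is correct and follows essentially the same approach as the paper: both split on whether the chosen machine has speed $s_m$ (triggering the new-phase update) or strictly smaller speed, and in the latter case use Lemma~\ref{lem:prices} on the adjacent pair $(r_{t_b},r_{t_{b+1}})$ (or $(r_{t_{|B|}},r_m)$ when $b=|B|$) to force $t_b\in T$. One minor remark: your edge-case check that $r_{t_{|B|}}(j)$ and $r_m(j)$ coincide when $s_{t_{|B|}}=s_m$ is not literally guaranteed by the ``arbitrary minimizer'' definition of representative, but it is also unnecessary---in that situation the algorithm simply does not assign a separate price to $r_m$, so the candidate set is exactly $\{r_{t_1},\ldots,r_{t_{|B|}}\}$ and the argument goes through without that identification.
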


\begin{lemma}\label{lem:Snempty}
If $S \neq \emptyset$, then $DR$ sets prices so that $j$ chooses some representative $r_i$ where $s_i \leq s_k$ and $i \in T$
(which is consistent with $FF$).
\end{lemma}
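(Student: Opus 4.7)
The plan is to determine the machine chosen by the rational job $j$ from among those given finite prices, namely $r_{t_1}, \ldots, r_{t_{|B|}}$ and possibly $r_m$, and verify that it lies in $\{r_i : s_i \le s_k,\ i \in T\}$, the set Flex-Fit is free to choose from. Let $g(b) := c_{r_{t_b},j}$ and $\nu_i := s_i\bigl((2+\epsilon)\Lambda - \hat{\ell}_i(j-1)\bigr)$, so that $i \in T \iff p_j \le \nu_i$. A direct computation using the telescoping price formula gives
\[
g(b+1) - g(b) \;=\; \frac{s_{t_{b+1}} - s_{t_b}}{s_{t_{b+1}}} \cdot \frac{\nu_{t_b} - p_j}{s_{t_b}},
\]
which both recovers Lemma~\ref{lem:prices} and makes the sign of each jump explicit.

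I next observe that $\nu$ is \emph{strictly} increasing along $B$. The inner while loop of Dynamic-Related picks $t_{b+1}$ at a position strictly later than $t_b$ in the sorted order $\mu_{i_1} \le \cdots \le \mu_{i_m}$, so $\mu_{t_1} \le \mu_{t_2} \le \cdots \le \mu_{t_{|B|}}$, and the same loop guarantees $s_{t_1} < s_{t_2} < \cdots < s_{t_{|B|}}$. Since $\nu_i = \mu_i + \epsilon\Lambda\, s_i$, strict monotonicity of $\nu_{t_b}$ in $b$ follows. Hence $\{b : t_b \in T\} = \{b : p_j \le \nu_{t_b}\}$ is an upward-closed interval $[b^\dagger, |B|]$ (empty if no such $b$ exists), and the displayed formula implies $g$ is strictly decreasing on $[1, b^\dagger]$ and non-decreasing on $[b^\dagger, |B|]$. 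In particular $g$ attains its minimum over $B$ at $b^\dagger$.

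To show $b^\dagger$ exists and satisfies $s_{t_{b^\dagger}} \le s_k$, let $k = \min S$. If some index $\beta$ has $s_{t_\beta} = s_k$, then by construction $t_\beta$ is the speed-$s_k$ machine with the largest $\mu$-value, so $\mu_{t_\beta} \ge \mu_k \ge p_j$ and $t_\beta \in S \subseteq T$. Otherwise $s_k$ is skipped in $B$, which forces every speed-$s_k$ machine to have been discarded while the inner loop was processing some slower speed $s_{t_\beta} < s_k$; the pruning rule $A \leftarrow A \setminus \{i_a : a \le w\}$ then implies every speed-$s_k$ machine sits at position $\le w$ in the sorted $\mu$-order, so $\mu_k \le \mu_{t_\beta}$, giving $t_\beta \in S \subseteq T$ again. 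In either case $b^\dagger \le \beta$, so $s_{t_{b^\dagger}} \le s_{t_\beta} \le s_k$ and $t_{b^\dagger} \in T$ by definition. An analogous sign analysis for the last telescoping step yields $c_{r_m,j} \ge g(|B|)$ iff $t_{|B|} \in T$, which holds because $b^\dagger \le |B|$; thus $r_m$ never beats $r_{t_{b^\dagger}}$. All other machines carry an $\infty$ price, so the rational job picks some minimizer of $g$ among $\{r_{t_1}, \ldots, r_{t_{|B|}}, r_m\}$.

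The final subtlety is tie-breaking. Strict monotonicity of $\nu$ implies the only possible tie at the minimum is $g(b^\dagger) = g(b^\dagger+1)$, and this occurs precisely when $p_j = \nu_{t_{b^\dagger}}$. In that case $\mu_{t_{b^\dagger}} < \nu_{t_{b^\dagger}} = p_j$ forces $t_{b^\dagger} \notin S$, so $b^\dagger < \beta$; then $b^\dagger + 1 \le \beta$ yields $s_{t_{b^\dagger+1}} \le s_{t_\beta} \le s_k$ and $t_{b^\dagger+1} \in T$, making $r_{t_{b^\dagger+1}}$ also a valid Flex-Fit choice. Hence every rational selection lies in the Flex-Fit set, as required. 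The main obstacle will be the case split in the preceding paragraph — specifically the case where the speed $s_k$ is \emph{not} represented in $B$; tracing how the greedy pruning of $A$ can eliminate an entire speed class while still leaving a slower $t_\beta \in B$ with $t_\beta \in S$ is the most delicate step.
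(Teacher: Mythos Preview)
Your argument is correct in substance and takes a somewhat different route from the paper. The paper anchors on the smallest index $b^*$ with $t_{b^*}\in S$ (not $T$), proves the exact equality $s_{t_{b^*}}=s_k$ by a direct contradiction argument on the inner while loop, and then observes that $p_j\le\mu_{t_{b^*}}<\nu_{t_{b^*}}$ gives a \emph{strict} preference of $r_{t_{b^*}}$ over every later $r_{t_h}$ and over $r_m$; hence whatever index $h$ the job selects must satisfy $h\le b^*$, and the choice inequality $c_{r_{t_h}j}\le c_{r_{t_{h+1}}j}$ together with Lemma~\ref{lem:prices} then yields $t_h\in T$. The paper never needs to locate the minimizer precisely, so tie-breaking is absorbed automatically. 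Your approach instead pins the minimizer down exactly via the explicit difference formula and the strict monotonicity of $\nu_{t_b}=\mu_{t_b}+\epsilon\Lambda s_{t_b}$; this gives a clean unimodal picture for $g$, at the cost of the extra existence argument for $\beta$ and an explicit tie analysis. Both routes are fine; yours is more computational, the paper's a little more structural.

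One small gap to close: you establish $c_{r_m,j}\ge g(|B|)\ge g(b^\dagger)$ but do not rule out equality, so in principle the job could select $r_m$ in a tie. In fact equality is impossible when $S\neq\emptyset$: if $b^\dagger<|B|$ then strict monotonicity gives $\nu_{t_{|B|}}>\nu_{t_{b^\dagger}}\ge p_j$ and hence $c_{r_m,j}>g(|B|)$; if $b^\dagger=|B|$ then your $\beta$ must equal $|B|$, whence $t_{|B|}\in S$ and $p_j\le\mu_{t_{|B|}}<\nu_{t_{|B|}}$, again giving the strict inequality. Either way $r_m$ is strictly dominated and the only tie at the minimum is indeed $g(b^\dagger)=g(b^\dagger+1)$, which you handle correctly. (This is exactly the point where the paper's choice of anchor in $S$ rather than $T$ pays off: the strictness $\mu<\nu$ is immediate.) A second cosmetic point: in your ``$s_k$ skipped'' case, speed-$s_k$ machines may be discarded across several iterations, not a single one; but taking $\beta$ to be the iteration in which the \emph{last} speed-$s_k$ machine is removed makes your conclusion $\mu_k\le\mu_{t_\beta}$ and $s_{t_\beta}<s_k$ go through unchanged.
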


This concludes the proof of Theorem \ref{thm:rel}.

\section{Lower Bounds for Unrelated Machines}
\label{sec:unrelated}

In this section, we first give a lower bound which shows that no deterministic dynamic pricing scheme can achieve a competitive ratio
better than $\Omega(m)$ for the unrelated machine setting.  Note that this competitive ratio can be achieved
by the online greedy algorithm that assigns each job $j$ to a machine that minimizes $\ell_i(j-1) + p_{ij}$.
Moreover, this behavior can be mimicked by a dynamic pricing scheme, simply by setting all prices $\pi_{ij} = 0$.

\begin{theorem}\label{thm:unreldetlb}
No deterministic dynamic pricing scheme can achieve a competitive ratio better than $\Omega(m)$ for the unrelated machine setting.
\end{theorem}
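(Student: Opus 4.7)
The plan is to adaptively construct an input $\sigma$ of length $n=\Theta(m^2)$ from two job types whose processing times depend on the scheme's posted prices, and to argue that the scheme's makespan is $\Omega(m)$ while $L^{*}(\sigma)=O(1)$. A \emph{type~1} job has $p_{i^{*}j}=0$ on a designated machine~$i^{*}$ and $p_{ij}=1$ on every other machine, so OPT places it on~$i^{*}$ with no increase in any load. A \emph{type~2} job has $p_{ij}=1$ on every machine, so OPT can spread any batch of up to~$m$ such jobs to makespan~$1$.

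Before each job $j$ I compute effective loads $e_i(j)=\ell_i(j-1)+\pi_{ij}$ and branch. \emph{Case~1:} if some $i^{*}$ satisfies $e_{i^{*}}(j)>\min_i e_i(j)+1$, insert a type~1 job with zero-machine~$i^{*}$. The selfish agent's cost is $e_{i^{*}}(j)$ on~$i^{*}$ and $e_i(j)+1$ elsewhere, so the choice of $i^{*}$ forces the agent strictly to an argmin-effective-load machine, adding~$1$ to the scheme's total load while OPT absorbs the job on~$i^{*}$ at~$0$. \emph{Case~2:} otherwise all $e_i(j)$ lie within~$1$ of the minimum; insert a type~2 job. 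Every selfish cost equals $e_i(j)+1$, so the agent picks an argmin machine and, by worst-case tie-breaking, is forced onto a designated machine~$1$ fixed once at the start of the sequence, adding~$1$ to $\ell_1$.

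Let $n_2$ count case-$2$ steps. Then the scheme's total load is $\ge n$, so by pigeonhole its maximum machine load is $\ge n/m$; also $\ell_1\ge n_2$. Thus the scheme's makespan is $\ge \max(n_2,\,n/m)$. OPT places every type~1 job at cost~$0$ and spreads the $n_2$ type~2 jobs evenly, achieving makespan $\lceil n_2/m\rceil$. Taking $n=m^2$: if $n_2\le m$ the scheme's makespan is $\ge m$ and OPT's is at most~$1$, giving ratio $\ge m$; if $n_2>m$ the scheme's $\ell_1$ is $\ge n_2$ while OPT's makespan is at most $n_2/m+1$, again giving ratio $\Omega(m)$. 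Taking $m$ large overwhelms any fixed additive constant~$a$.

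The main obstacle is handling states where machine~$1$ is kept strictly above the argmin by less than~$1$ while all $e_i$ remain within~$1$ of the minimum: case~$2$'s tie-break no longer delivers machine~$1$ to the selfish agent. I would handle this by extending case~$1$ to fire whenever $e_1(j)>\min_i e_i(j)+1$, inserting a type~$1$ job with zero-machine~$1$: the selfish agent's cost is $e_1(j)$ on machine~$1$ and $e_i(j)+1$ elsewhere, so the agent is pushed to an argmin machine, contributing~$1$ to the scheme's total load while OPT still absorbs at machine~$1$ at no cost. Combined with a careful accounting that charges the scheme for each step irrespective of the case, and with OPT's explicit schedule, this yields the claimed $\Omega(m)$ lower bound against any competitive ratio $c$ and any additive constant~$a$.
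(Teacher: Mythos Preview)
Your case-2 step does not deliver the job to machine $1$. When all effective loads $e_i(j)$ lie within $1$ of the minimum but are not all equal, a type-2 job with $p_{ij}=1$ everywhere has cost $e_i(j)+1$ on machine $i$, and the selfish agent goes to whichever machine attains $\min_i e_i(j)$. There is no tie to break unless machine $1$ happens to be an argmin, and the scheme can easily keep $e_1(j)$ strictly above the minimum (by, say, $1/2$) at every step. Your proposed patch---firing case~1 whenever $e_1(j)>\min_i e_i(j)+1$---targets the wrong regime: that situation is already covered by your original case~1 with $i^{*}=1$. The obstacle you yourself identify is the regime $0<e_1(j)-\min_i e_i(j)\le 1$, and neither of your cases forces anything onto machine $1$ there. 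Without the conclusion $\ell_1\ge n_2$, the accounting collapses when $n_2$ is large: the pigeonhole bound gives only makespan $\ge n/m=m$, while $L^{*}=\lceil n_2/m\rceil$ can itself be $\Theta(m)$.

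The paper's proof has the same two-case architecture as yours, but closes this gap with one extra idea: it shrinks the case-1 threshold from $1$ to an arbitrarily small $\epsilon$ and, crucially, makes the type-2 job asymmetric, with $p_{1j}=1$ and $p_{ij}=1+2\epsilon$ for $i>1$. In case~2 all effective loads are then within $\epsilon$ of each other, so $c_{1j}=e_1(j)+1\le e_i(j)+\epsilon+1<e_i(j)+1+2\epsilon=c_{ij}$ for every $i>1$, and machine $1$ is the strict minimizer with no tie-breaking needed. The paper's type-1 job is correspondingly $(p_{ij},p_{i'j})=(\epsilon,0)$ with $\infty$ elsewhere. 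With this perturbation in place, every case-2 job lands on machine $1$ and the rest of your argument (and the paper's phase-based version of it) goes through.
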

\begin{proof}

Let $D$ denote the dynamic pricing scheme, and let $OPT$ denote an optimal solution (i.e., a solution that
minimizes the makespan).  Our lower bound consists of some number of phases $k$, where $k$ can be arbitrarily large.
At the end of all $k$ phases, we argue that the makespan of the dynamic pricing scheme is at least $k \cdot m$, while $OPT$
only has a makespan of at most $(1+2\epsilon)\cdot k$ for an arbitrarily small $\epsilon > 0$.  If we can maintain this property,
this would imply a lower bound of $\Omega(m)$ on the competitive ratio.

The input consists of the following sequence of jobs.  Suppose the adversary has already introduced $j-1$ jobs,
and the dynamic pricing scheme $D$ has processed these jobs.  Hence, the dynamic pricing scheme must now set prices
$\pi_{ij}$ on each machine $i$ (before job $j$ arrives).  Then, we introduce job $j$ according to the following
two cases (depending on how the dynamic pricing scheme $D$ behaves):
\begin{enumerate}
\item If there exist two distinct machines $i,i'$ such that $\ell_i(j-1) + \pi_{ij} + \epsilon < \ell_{i'}(j-1) + \pi_{i'j}$,
introduce the job $j$ with $p_{ij} = \epsilon$, $p_{i'j} = 0$, and $p_{bj} = \infty$ for all $b \neq i,i'$.
\item Otherwise, introduce the job $j$ with $p_{1j} = 1$ and $p_{ij} = 1+2\epsilon$ for all $i > 1$.
\end{enumerate}
Note that, if in the second case all prices are $\infty$, then we assume job $j$ (i.e., agent $j$) breaks ties
by preferring machine $1$.  Each phase consists of $m$ occurrences of the second case.

The first claim we argue is that there exists an assignment of jobs to machines such that the load on each machine
grows by at most $1+2\epsilon$ at the end of each phase.  This implies that, after $k$ phases have concluded, there is an optimal solution
with a makespan of at most $(1+2\epsilon)\cdot k$.  We construct such an assignment as follows.  Observe that each time $D$
assigns prices to machines such that we fall into the first case, we simply assign the constructed job $j$
to machine $i'$ (i.e., the machine satisfying $p_{i'j} = 0$).  Hence, each time case $1$ occurs, the load of each machine
remains the same.  On the other hand, consider the jobs introduced due to the second case during a phase, of which there are $m$.
We simply assign these $m$ jobs to $m$ distinct machines.  Each such job causes the load on the machine to which it is assigned
to increase by at most $1+2\epsilon$.  Hence, there is an assignment of jobs to machines that causes the load on every machine
to increase by at most $1+2\epsilon$ after each phase completes.

We now argue that, after each phase completes, the dynamic pricing scheme $D$ sets prices such that jobs choose machines in a way that the load on
the first machine increases by $m$.  This implies that, after $k$ phases, the load on the first machine (according to the assignment
given by $D$) is at least $k \cdot m$, and hence the makespan must be at least $k \cdot m$.  We first consider the case
where $D$ sets prices such that we fall into case $1$, and argue that the load for some machine increases by $\epsilon$.  In this case, there exist two machines $i$ and $i'$ such that
$\ell_i(j-1) + \pi_{ij} + \epsilon < \ell_{i'}(j-1) + \pi_{i'j}$.  We argue that machine $i$ attains the minimum cost $c_{ij}$ for job $j$.  This
holds since $c_{ij} = \ell_i(j-1) + p_{ij} + \pi_{ij} = \ell_i(j-1) + \epsilon + \pi_{ij} < \ell_{i'}(j-1) + \pi_{i'j} = \ell_{i'}(j-1) + p_{i'j} + \pi_{i'j}$
(recall that $p_{ij} = \epsilon$ while $p_{i'j} = 0$).  Moreover, $c_{bj} = \infty$ for all $b \neq i,i'$, since $p_{bj} = \infty$.
Hence, the dynamic pricing scheme $D$ causes job $j$ to be assigned to machine $i$, which increases the load on machine $i$ by $\epsilon$.

We now consider the case where $D$ sets prices such that we fall into case $2$, and argue that each such job always chooses machine
$1$ (causing the load on machine $1$ to increase by $1$, since $p_{1j} = 1$ for each such job $j$).
Since $m$ such jobs are introduced during each phase, this would show that the load on machine $1$ increases
by $m$ after each phase completes.  In the second case,
if $\pi_{1j} = \infty$, then all prices must be $\infty$ (since there does not exist another machine $i$ satisfying
$\ell_i(j-1) + \pi_{ij} + \epsilon < \ell_{1}(j-1) + \pi_{1j} = \infty$).  Hence, since we assumed that such a job $j$
breaks ties in favor of machine $1$, the job $j$ prefers machine $1$ over all other machines (which increases
the load on machine $1$ by $1$).  On the other hand, if $\pi_{1j}$ is finite, we argue that job $j$ is also assigned to machine $1$.
Observe that in the second case, for any two machines $i$ and $i'$, we have $\ell_i(j-1) + \pi_{ij} + \epsilon \geq \ell_{i'}(j-1) + \pi_{i'j}$.
Hence, for any machine $i > 1$, we have $c_{1j} = \ell_1(j-1) + p_{1j} + \pi_{1j} = \ell_1(j-1) + 1 + \pi_{1j} \leq \ell_i(j-1) + 1 + \epsilon + \pi_{ij}
< \ell_i(j-1) + 1 + 2\epsilon  + \pi_{ij} = c_{ij}$.  This implies that machine $1$ minimizes job $j$'s cost, which in turn implies
that the load on machine $1$ increases by $1$.

Finally, observe that we can always eventually force the dynamic pricing scheme $D$ to produce prices such that we fall into case $2$.
This holds since, each time we fall into case $1$, there is an optimal solution which satisfies the property that the load on every machine
does not change.  On the other hand, $D$ causes such jobs $j$ to be assigned to some machine for which the processing time is $\epsilon$.
This implies that the assignment determined by $D$ causes the load on some machine to increase by $\epsilon$, while in an optimal solution
the load on every machine remains the same.  This process cannot go on forever, since the makespan of the schedule produced by
$D$ can be made arbitrarily bad while the makespan of an optimal solution does not change (which would result in an arbitrarily bad
competitive ratio).
\end{proof}

We now give a lower bound of $\Omega(m)$ on the expected competitive ratio for any randomized dynamic pricing scheme,
which even holds against an oblivious adversary.

\begin{theorem}\label{thm:unrelrandlb}
No randomized dynamic pricing scheme can achieve an expected competitive ratio better than $\Omega(m)$ for the unrelated machine setting.
\end{theorem}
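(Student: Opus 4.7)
The plan is to invoke Yao's minimax principle: it suffices to exhibit a single distribution $\mathcal{D}$ over input sequences such that every deterministic dynamic pricing scheme $A$ satisfies $E_{\sigma \sim \mathcal{D}}[A(\sigma)] \geq \Omega(m) \cdot E_{\sigma \sim \mathcal{D}}[L^*(\sigma)] + a$ for any fixed additive constant $a$. My distribution reuses the two job types from Theorem~\ref{thm:unreldetlb}, but randomizes both the type and, for type $1$ jobs, the ordered pair $(i,i')$: each of $N$ independent jobs is type $2$ with probability $q$; otherwise it is a type $1$ job whose pair is drawn uniformly from the $m(m-1)$ ordered pairs of distinct machines. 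I would fix $\epsilon$ as a small positive constant, take $q = \Theta(1/m^3)$, and let $N$ be large enough to absorb the additive constant.

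For any deterministic scheme $A$, let $\bar{p}_j = \Pr[A \text{ is in case } 1 \text{ at time } j]$, where case $1$ refers to the condition from Theorem~\ref{thm:unreldetlb} that some ordered pair $(i,i')$ satisfies $\ell_i(j-1)+\pi_{ij}+\epsilon < \ell_{i'}(j-1)+\pi_{i'j}$, and let $\bar{p} = \frac{1}{N}\sum_{j=1}^N \bar{p}_j$. Two observations, both lifted from the analysis of Theorem~\ref{thm:unreldetlb}, drive the argument. First, whenever $A$ is in case $2$ at time $j$ and the $j$-th job is type $2$ (an event of conditional probability $q(1-\bar{p}_j)$), the job is forced onto machine $1$, so $E[\ell_1(N)] \geq qN(1-\bar{p})$. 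Second, whenever $A$ is in case $1$ at time $j$ and the $j$-th job is type $1$, the uniformly random pair is ``bad'' with probability at least $1/(m(m-1))$, in which case the job adds $\epsilon$ to some machine's load; using $E[\max_i \ell_i(N)] \geq E[\sum_i \ell_i(N)]/m$ yields $E[\max_i \ell_i(N)] \geq \epsilon(1-q)\bar{p}N / (m^2(m-1))$.

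Because OPT assigns each type $1$ job to its $i'$ machine (zero load) and distributes type $2$ jobs evenly across the $m$ machines, $E[L^*(\sigma)] = O(qN/m)$. A dichotomy on $\bar{p}$ then finishes the argument. If $\bar{p} \leq 1/2$, the first observation gives $E[A] \geq qN/2$, and the ratio is $\Omega(m)$ directly. If $\bar{p} > 1/2$, the second observation gives $E[A] = \Omega(\epsilon N/m^3)$, while the choice $q = \Theta(1/m^3)$ makes $E[L^*] = \Theta(N/m^4)$, again yielding a ratio of $\Omega(\epsilon m) = \Omega(m)$. Taking $N$ large makes both bounds dominate the additive constant $a$.

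The main obstacle lies in the case $1$-heavy regime. Relative to the adaptive adversary of Theorem~\ref{thm:unreldetlb}, randomizing the pair loses a factor of $m(m-1)$ in the per-job probability of damage, and passing from a total-load bound to a max-load bound via $\max \geq \text{total}/m$ costs an additional factor of $m$. Setting $q = \Theta(1/m^3)$ shrinks $E[L^*]$ by exactly enough to compensate, but one must verify that the algorithm genuinely cannot escape both regimes by clever pricing — i.e., that the dichotomy on $\bar{p}$ is truly exhaustive regardless of how $A$ sets prices over the random input, and that the independence of the job type from the algorithm's state at time $j$ (a consequence of the i.i.d.\ construction of $\mathcal{D}$) justifies multiplying the probabilities as above.
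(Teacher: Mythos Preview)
Your proposal is correct and takes a genuinely different route from the paper's proof. The paper does \emph{not} invoke Yao's principle; instead, for a given randomized scheme $R$, it builds a single deterministic (oblivious) sequence job by job: at step $j$ it inspects the distribution of $R$'s state after the already-fixed prefix, chooses the type that occurs with probability at least $1/2$, and for type~1 further fixes a specific pair $(i,i')$ that is bad with probability at least $1/(2m^2)$. The dichotomy is then on the deterministic counts $n_1,n_2$ of the two job types, with the case $n_2\le tm$ giving an unbounded ratio and $n_2>tm$ giving $\Omega(m)$.

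Your approach replaces this algorithm-dependent construction with a single algorithm-independent i.i.d.\ distribution and recovers the lost adaptivity by calibrating the type-2 probability $q=\Theta(1/m^3)$; the dichotomy shifts from $(n_1,n_2)$ to the average case-1 probability $\bar p$. What this buys you is conceptual cleanliness (one distribution works for every scheme, and Yao is a black box), at the cost of a slightly more delicate balancing act in choosing $q$. What the paper's direct construction buys is a concrete hard instance for each $R$ and no need to worry about the ratio-of-expectations versus expected-ratio distinction---though, as you implicitly use, showing $E_\sigma[A(\sigma)]\ge c\,E_\sigma[L^*(\sigma)]+a$ is enough, since it is equivalent to $E_\sigma[A(\sigma)-c\,L^*(\sigma)]\ge a$, and Yao applied to the signed cost $A(\sigma)-c\,L^*(\sigma)$ yields a fixed $\sigma^*$ with $E_R[R(\sigma^*)]\ge c\,L^*(\sigma^*)+a$. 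Both arguments rely on the same two job types and the same per-step damage estimates from Theorem~\ref{thm:unreldetlb}.
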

\begin{proof}
Let $D$ denote the randomized dynamic pricing scheme, and let $OPT$ denote an optimal solution (i.e., a solution that
minimizes the makespan).  Our lower bound draws on ideas from the proof of our deterministic lower bound (i.e., the proof of
Theorem~\ref{thm:unreldetlb}).  Let $\epsilon > 0$ be arbitrarily small.  The input sequence will consist of some number of jobs $n$,
where we make $n$ arbitrarily large.

In particular, the input sequence is iteratively constructed by the adversary in a deterministic manner.
After the first $j-1$ jobs have been determined (independent of the randomness of the algorithm), the adversary constructs
job $j$ according to the following two cases, one of which must happen with probability at least $\frac{1}{2}$
(depending on the code of the dynamic pricing scheme $D$, but independent of any random coin flips).  Here,
the probability is taken over the coin flips of the algorithm, and the probability is computed as if
the first $j-1$ jobs constructed so far are given as input to $D$ (note that the jobs are only imagined
to be given as input):
\begin{enumerate}
\item There exist two distinct machines $i,i'$ such that $\ell_i(j-1) + \pi_{ij} + \epsilon < \ell_{i'}(j-1) + \pi_{i'j}$ with probability at
least $\frac{1}{2}$.  Note that, conditioned on this event occurring, there must exist a fixed pair of machines $(i,i')$
(obtained deterministically) such that the event $\ell_i(j-1) + \pi_{ij} + \epsilon < \ell_{i'}(j-1) + \pi_{i'j}$ occurs with
probability at least $\frac{1}{m^2}$, since there are at most $m^2$ such pairs.
For this fixed pair $(i,i')$, the adversary constructs the job $j$
with $p_{ij} = \epsilon$, $p_{i'j} = 0$, and $p_{bj} = \infty$ for all $b \neq i,i'$.
\item Otherwise, introduce the job $j$ with $p_{1j} = 1$ and $p_{ij} = 1+2\epsilon$ for all $i > 1$.
\end{enumerate}
Observe that one of these two cases must happen with probability at least $\frac{1}{2}$.  Hence, the adversary
constructs job $j$ depending on which such case occurs with higher probability.
Note that, if in the second case all prices are $\infty$, then we assume job $j$ (i.e., agent $j$) breaks ties
by preferring machine $1$.

Let $n_1$ be the number of jobs constructed due to case $1$, and $n_2$ be the number of jobs constructed
due to case $2$.  Note that $n_1$ and $n_2$ are fixed (i.e., they are not random variables), and moreover
we have $n_1 + n_2 = n$.  Using similar reasoning as in the proof of Theorem~\ref{thm:unreldetlb}, $OPT \leq (1+2\epsilon) \cdot \lceil\frac{n_2}{m}\rceil$,
since we can always assign case $1$ jobs to the machine on which a load of $0$ is incurred.  Moreover, for every $m$
jobs constructed due to case $2$, we can always put the $m$ jobs on $m$ distinct machines, increasing the makespan
by at most $(1+2\epsilon)$.

On the other hand, we argue that the expected makespan of the dynamic pricing scheme $D$ must be large.
Observe that for each job $j$ constructed due to case $1$, the dynamic pricing scheme's expected sum of loads
increases as follows: $\mathbb{E}[\sum_{i}\ell_i(j) - \ell_i(j-1)] \geq \epsilon \cdot \frac{1}{2m^2}$.  This holds since,
for each such job $j$, if the event $\ell_k(j-1) + \pi_{kj} + \epsilon < \ell_{k'}(j-1) + \pi_{k'j}$ occurs for the fixed
pair of machines $(k,k')$, which happens with probability at least $\frac{1}{2m^2}$, then job $j$ induces a load of $\epsilon$
on machine $k$.  Hence, we have $\mathbb{E}[\sum_i \ell_i(n)] \geq n_1 \cdot \epsilon \cdot \frac{1}{2m^2}$.
Moreover, for each job $j$ constructed due to case $2$, the load on machine $1$ increases by $1$ with probability at least $\frac{1}{2}$
(since if the event corresponding to case $2$ occurs, which happens with probability at least $\frac{1}{2}$, job $j$
is assigned to machine $1$).  Hence, we have $\mathbb{E}[\ell_1(n)] \geq n_2 \cdot \frac{1}{2}$.  Putting it all together,
we get
$$\mathbb{E}\left[\max_i \ell_i(n)\right] \geq \mathbb{E}\left[\max\left\{\frac{1}{m} \sum_i \ell_i(n), \ell_1(n)\right\}\right] \geq
\max\left\{\frac{\epsilon \cdot n_1}{2m^3},\frac{n_2}{2}\right\}.$$

We consider the following two cases.  In the first case, we have $n_2 \leq t \cdot m$ for some large integer $t$, and in the
other case we have $n_2 > t \cdot m$.  Note that we assume $n_2 > 0$, as otherwise $OPT = 0$ and the competitive ratio is arbitrarily bad.
In the first case, the expected competitive ratio is at least
$\frac{\epsilon \cdot (n - n_2)}{2m^3 \cdot (1+2\epsilon) \cdot \lceil\frac{n_2}{m}\rceil} \geq \frac{\epsilon \cdot (n - t \cdot m)}{2m^3 \cdot (1+2\epsilon)\cdot(t + 1)} = \Omega(\sqrt{n})$,
which can be made arbitrarily bad (note that we choose $n$ sufficiently large so that $\sqrt{n} \gg t \cdot m^3$).
In the second case, the expected competitive ratio is at least
$\frac{n_2}{2(1+2\epsilon) \cdot \lceil\frac{n_2}{m}\rceil} = \Omega(m)$.  Note that, in the second case,
$OPT$ is large and hence our lower bound rules out additive constants in the competitive ratio of the dynamic pricing scheme.

\end{proof}

\section{Greedy and Static Pricing are Equivalent}
\label{sec:static_pricing}

In this section, we relate the greedy algorithm and static pricing schemes.
Our main theorem, Theorem \ref{thm:equiv}, is given below. The proof of Theorem \ref{thm:equiv} appears in Appendix \ref{apx:lowerequiv}.
\begin{theorem}\label{thm:equiv}
If the greedy algorithm that assigns each job $j$ to a machine $i$ that minimizes $\ell_i(j-1) + p_{ij}$
is $c$-competitive in some machine model (either identical, related, restricted,
or unrelated), then the static pricing scheme that sets all prices to zero is also $c$-competitive.
Moreover, a lower bound of $c$ on the (expected) competitive ratio of the greedy algorithm implies a lower
bound of $\Omega(c)$ for any deterministic or randomized static pricing scheme (for all machine models).
The randomized lower bound holds as long as the greedy lower bound does not specify how ties are broken.
\end{theorem}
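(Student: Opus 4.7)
The easy direction follows from the observation that the zero-price static scheme coincides with greedy: when $\pi_{i*}=0$ for all $i$, each agent $j$ picks a machine minimizing $\ell_i(j-1)+p_{ij}$, which is precisely greedy's rule (up to tie-breaking). Since greedy's $c$-competitiveness is assumed to hold regardless of tie-breaking, the zero-price scheme inherits it.

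\textbf{Deterministic lower bound.} Given any deterministic pricing vector $\pi=(\pi_{1*},\ldots,\pi_{m*})$, the plan is to feed the scheme an adversarial input $\sigma=\sigma_1\circ\sigma_{LB}$, where $\sigma_{LB}$ is a greedy lower bound instance witnessing ratio $c$, and $\sigma_1$ is a short flattening prefix that drives the scheme into a state in which all effective loads $\ell_i+\pi_{i*}$ equal a common value $M$. From that moment on, each arriving job minimizes $\ell_i+p_{ij}+\pi_{i*}$, which—because $\ell_i+\pi_{i*}$ is the same on every machine—reduces to minimizing $p_{ij}$ plus the load incurred within $\sigma_{LB}$, exactly greedy's rule. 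Thus the scheme's makespan dominates the greedy makespan on $\sigma_{LB}$, and provided $\sigma_1$ is designed so that $L^*(\sigma)=O(L^*(\sigma_{LB}))$, we obtain competitive ratio $\Omega(c)$.

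\textbf{Constructing the flattening prefix.} This is the main technical obstacle, and its solution depends on the model. For restricted and unrelated machines, each flattening job can be given a finite processing time only on its target machine (infinity elsewhere), so the adversary directly raises the load of any machine $i$ by $\max_j\pi_{j*}-\pi_{i*}$. For identical machines, every job is routed to the current minimum-effective-load machine, so a single pass that injects jobs in increasing order of $\pi_{i*}$ with carefully chosen sizes suffices. For related machines, speeds constrain processing times, but one can still equalize effective loads by feeding jobs small enough to be demonstrably routed to the current minimum-effective-load machine.

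\textbf{Randomized lower bound.} Against oblivious adversaries we cannot condition the input on prices, so flattening does not apply. The plan is instead to scale all job sizes in $\sigma_{LB}$ by a large factor $B$, producing an instance $B\cdot\sigma_{LB}$ whose greedy ratio is still $c$. For any realization of $\pi$, the effective-load imbalances remain bounded while job sizes grow with $B$, so each machine choice by the pricing scheme agrees with some greedy choice up to lower-order terms. Since the greedy lower bound is tie-breaking-independent by hypothesis, the $\Omega(c)$ ratio transfers in expectation. The delicate point is the degenerate case where the scheme posts prices of magnitude comparable to $B$; this either effectively restricts the scheme to a subset of machines (on which the greedy lower bound continues to apply) or occurs with vanishing probability.
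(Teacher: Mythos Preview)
Your proposal matches the paper's approach in all three parts: the upper bound is immediate from $\pi\equiv 0$; the deterministic lower bound prepends a model-specific flattening prefix that equalizes $\ell_i+\pi_{i*}$ to $\pi_{\max}$ and then runs the greedy lower-bound sequence; and the randomized lower bound scales all job sizes so that the (fixed but random) price offsets become negligible relative to the smallest nonzero gap $\delta$ arising in greedy's cost comparisons, so that the scheme's choices coincide with some valid greedy tie-breaking. Your final sentence flags a subtlety---random prices of magnitude comparable to the scaling factor---that the paper does not address either (it simply asserts ``scale so that the random prices are $\ll\delta$''); your proposed resolution via ``restricts to a subset of machines'' is not quite right, since greedy lower bounds need not survive machine deletion, but this is a shared informality rather than a divergence from the paper.
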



\bibliographystyle{ACM-Reference-Format}
\bibliography{lbrefs}

\appendix

\section{Static Pricing Schemes $\equiv$ Greedy}\label{apx:lowerequiv}

Recall that a static pricing scheme is completely determined by an $m$-dimensional vector $\pi = (\pi_{1*},\ldots,\pi_{m*})$,
which is set in advance before any agents arrive.  First, we claim that static pricing schemes can always mimic
the online greedy algorithm where the greedy choice corresponds to placing the incoming job $j$ on the machine $i$
that minimizes $\ell_i(j-1) + p_{ij}$.  In particular, this online greedy algorithm can be mimicked by a
static pricing scheme, simply by setting all prices to $0$ (i.e., $\pi = (0,\ldots,0)$).  With these
static prices, all incoming agents $j$ choose a machine $i$ that minimizes
$\ell_i(j-1) + p_{ij} + \pi_{i*} = \ell_i(j-1) + p_{ij}$ (since $\pi_{i*} = 0$ for all $i$).  Hence, agents
choose a machine in a manner that is consistent with the greedy algorithm's choices.

Theorem \ref{thm:equiv} immediately follows from the discussion above, along with Lemma~\ref{lem:detequiv}
(our deterministic lower bound) and Lemma~\ref{lem:randequiv} (our randomized lower bound) below.
We first give our deterministic lower bound in the following lemma.

\begin{lemma}\label{lem:detequiv}
A lower bound of $c$ on the competitive ratio of the greedy algorithm that assigns each job $j$ to a machine $i$
that minimizes $\ell_i(j-1) + p_{ij}$ implies a lower bound of $\Omega(c)$ on the competitive ratio of
any deterministic static pricing scheme (for the identical, related, restricted, and unrelated machine models).
\end{lemma}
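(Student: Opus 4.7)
The approach is to reduce any deterministic static pricing scheme to the greedy algorithm with a particular initial configuration of loads. The key observation is that, under prices $\pi = (\pi_{1*}, \ldots, \pi_{m*})$, agent $j$ minimizes $\ell_i(j-1) + p_{ij} + \pi_{i*} = \bigl(\ell_i(j-1) + \pi_{i*}\bigr) + p_{ij}$, so the scheme produces exactly the schedule that greedy would produce if each machine $i$ started with an ``initial effective load'' of $\pi_{i*}$. Since translating all prices by a common constant leaves agent behavior unchanged, we may normalize so that $\min_i \pi_{i*} = 0$, and set $M = \max_i \pi_{i*}$. The adversarial input is built in two parts: a flattening prefix $\sigma_0$ that equalizes the effective loads, followed by a scaled copy of the greedy lower-bound sequence.

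The flattening prefix is designed so that, after $\sigma_0$ has been processed by the scheme, every machine has effective load equal (or nearly equal) to $M$, while $L^*(\sigma_0) = O(M)$. For unrelated and restricted machines this is direct: for each machine $i$ with $\pi_{i*} < M$, introduce a single job whose processing time is $M - \pi_{i*}$ on machine $i$ and $\infty$ on every other machine, forcing the scheme to place the job on $i$; an optimal solution must also place each such job on its unique feasible machine, so $L^*(\sigma_0) \leq M$. For identical machines the claim reduces to $\Omega(1)$, which is trivial (or can be obtained by a stream of vanishingly small jobs, each routed by the scheme to the current minimum-effective-load machine). For related machines an analogous stream of small jobs suffices, because as $\epsilon \to 0$ the term $\epsilon / s_i$ becomes negligible compared to the fixed price differences $\pi_{i*} - \pi_{i'*}$, so jobs funnel into the minimum-effective-load machines in the required order.

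Next, let $\sigma_g$ be the sequence realizing the lower bound $\mathrm{Greedy}(\sigma_g)/L^*(\sigma_g) \geq c$, and let $\sigma_g^{(k)}$ denote $\sigma_g$ with every processing time multiplied by $k$. Since effective loads are all (approximately) equal to $M$ after $\sigma_0$, for large $k$ the agent comparisons on $\sigma_g^{(k)}$ reduce to comparisons of the scaled processing times alone, so the scheme reproduces greedy's decisions on a fresh input. Its final makespan is therefore at least $k \cdot \mathrm{Greedy}(\sigma_g) \geq c \cdot k \cdot L^*(\sigma_g)$. On the other hand, $L^*(\sigma_0 \cdot \sigma_g^{(k)}) \leq L^*(\sigma_0) + L^*(\sigma_g^{(k)}) = O(M) + k \cdot L^*(\sigma_g)$. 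The competitive ratio is therefore at least $\tfrac{c \cdot k \cdot L^*(\sigma_g)}{O(M) + k \cdot L^*(\sigma_g)}$, which tends to $c$ as $k \to \infty$ and also absorbs any additive constant in the definition of competitive ratio, yielding the desired $\Omega(c)$ bound.

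The main obstacle is the flattening step for the related (and to a lesser extent identical) machine model, where no processing-time trick can pin a job to a designated machine. Handling this requires either a limiting argument with vanishing job sizes, together with a verification that the resulting residual imbalance in effective loads is dominated by the scaled greedy instance, or a more careful construction using jobs of moderate size whose natural preferences still route them so as to equalize effective loads; in either case one must ensure that $L^*(\sigma_0)$ remains $O(M)$ so that the optimum on the concatenated input is not inflated beyond $O(M) + k \cdot L^*(\sigma_g)$.
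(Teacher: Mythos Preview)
Your approach is essentially the paper's: normalize prices, flatten effective loads with a prefix, then feed the greedy lower-bound instance and observe that the scheme now mimics greedy. The paper also finishes by noting that the flattening prefix adds only $\pi_{\max}$ to the optimum while the greedy instance can be scaled so that $L^*(\sigma) \gg \pi_{\max}$, which is exactly your $k \to \infty$ argument.

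Where the paper differs is precisely at the point you flag as the ``main obstacle.'' For related machines it avoids any limiting argument with infinitesimal jobs: instead it flattens exactly in $m$ jobs. For the $j$-th flattening job, continuously raise $p_j$ from $0$ until the minimum over $i$ of $\ell_i(j-1)+p_j/s_i+\pi_{i*}$ reaches $\pi_{\max}$, and release that job. The selfish job then goes to the minimizing machine and brings its effective load to exactly $\pi_{\max}$; after $m$ such jobs all machines are at $\pi_{\max}$. Moreover, under the scheme each flattening job lands on a distinct machine with resulting real load $\pi_{\max}-\pi_{i*}\le\pi_{\max}$, so the same assignment certifies $L^*(\sigma_0)\le\pi_{\max}$. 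This removes the residual-imbalance issue entirely and with it the need to assume the greedy lower-bound sequence has strict preferences. The paper handles identical machines by an equally direct $m$-job construction (sort machines by price, release jobs of size $\pi_{\max}-\pi_{i_j*}$ in that order), rather than appealing to the triviality of an $\Omega(1)$ bound; your dismissal there is correct but the explicit construction keeps the argument uniform across models.
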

\begin{proof}
Assume we have some static pricing scheme, the prices of which are given by $\pi = (\pi_{1*},\ldots,\pi_{m*})$.
We denote by $\pi_{\max}$ the largest price determined by the static pricing scheme, so that $\pi_{\max} = \max_i \pi_{i*}$.
For any input sequence $\sigma$, we denote by $ALG(\sigma)$ the makespan of the greedy algorithm, and let $ALG'(\sigma)$
denote the makespan achieved by the static pricing scheme.
Suppose there exists some adversarial sequence of $n$ input jobs which gives witness to the lower bound
of $c$ on the competitive ratio of the greedy algorithm.  That is, for any additive constant $a$, there exists
an input sequence $\sigma$ which satisfies the property that
$ALG(\sigma) > c \cdot L^*(\sigma) + a$ (recall that $a$ represents the additive constant in the competitive ratio,
and $L^*(\sigma)$ denotes the makespan of an optimal solution on input $\sigma$).  We note that $L^*(\sigma)$ can
be made arbitrarily large by scaling all jobs appropriately.

Our goal is to modify the input sequence $\sigma$ by prepending a few jobs at the beginning of the sequence,
obtaining a new input sequence $\sigma'$ which yields a comparable guarantee on the competitive ratio for any
static pricing scheme.  In particular, our aim is to introduce $m$ input jobs with the property that the loads
plus prices on all machines are flattened out.  More formally, we introduce jobs $1,\ldots,m$ with the property
that $\ell_i(m) + \pi_{i*} = \ell_{i'}(m) + \pi_{i'*} = \pi_{\max}$ for all machines $i,i'$.  Once this is done, we can
then introduce jobs in the sequence $\sigma$ to the static pricing scheme, and agents will choose machines
in precisely the same manner as jobs are placed by the online greedy algorithm.  Note that, initially,
all loads are $0$ (i.e., $\ell_i(0) = 0$ for all $i$).

We first describe the process in the unrelated machine setting.  Here, for each machine $i$,
we introduce the job $j$ such that $p_{ij} = \pi_{\max} - \pi_{i*}$ and $p_{i'j} = \infty$ for all $i' \neq i$.
These jobs may be introduced in an arbitrary order (notice that there are $m$ such jobs).  At the end of this sequence of $m$ jobs, since each
machine $i$ gets exactly one of the $m$ jobs, we have $\ell_i(m) = \pi_{\max} - \pi_{i*}$, and hence
$\ell_i(m) + \pi_{i*} = \pi_{\max}$ for all machines $i$.

We now turn our attention to the identical machine setting.
In this model, we must introduce jobs in a specific order.  In particular, we sort machines in increasing order
of their prices, so that we have $\pi_{i_1 *} \leq \pi_{i_2 *} \leq \cdots \leq \pi_{i_m *}$.  We introduce the following jobs:
for each $j = 1,\ldots,m$, we set $p_j = \pi_{\max} - \pi_{i_j*}$.  Notice that the first job chooses machine $i_1$, since
$\ell_{i_1}(0) + p_1 + \pi_{i_1 *} \leq \ell_i(0) + p_1 + \pi_{i*}$ for all machines $i$ (if there are ties, then the agent
can choose a machine arbitrarily and everything still goes through).  Moreover, we argue that every job $2 \leq j \leq m$ chooses
machine $i_j$.  This holds since, for all $k < j$, the cost on machine $i_k$ is given by
$\ell_{i_k}(j-1) + p_j + \pi_{i_k*} = \pi_{\max} - \pi_{i_k *} + p_j + \pi_{i_k*} = \pi_{\max} + p_j$, while the cost on machine $i_j$
is $\ell_{i_j}(j-1) + p_j + \pi_{i_j*} = (\pi_{\max} - \pi_{i_j*}) + \pi_{i_j*} = \pi_{\max}$.  Moreover, for all machines $k \geq j$,
the cost on machine $i_k$ is given by $\ell_{i_k}(j-1) + p_j + \pi_{i_k*} = p_j + \pi_{i_k*} \geq p_j + \pi_{i_j*}$, which is the
cost on machine $i_j$.  Hence, once all $m$ jobs have been introduced, we have the property that $\ell_i(m) + \pi_{i*} = \pi_{\max}$ for all
machines $i$.

Finally, we describe the process by which we compute $p_j$ in the related machine setting for each job
$1 \leq j \leq m$.  In a continuous manner, starting from $0$, we continuously increase $p_j$ until the machine $i$ that minimizes
$\ell_i(j-1) + \frac{p_j}{s_i} + \pi_{i*}$ (note that the minimizing machine may change) satisfies the property that
$\ell_i(j-1) + \frac{p_j}{s_i} + \pi_{i*}$ equals $\pi_{\max}$ (at which point the process stops). Once equality is attained,
we simply introduce the job $j$ with this value of $p_j$.  Again, it is easy to see that, after all $m$ jobs have been introduced,
we have the property that $\ell_i(m) + \pi_{i*} = \pi_{\max}$ for all machines $i$.

As mentioned, to obtain a lower bound on the competitive ratio of $ALG'$, we first introduce the $m$ jobs as mentioned above
(i.e., flatten things out), followed by the input sequence $\sigma$ (yielding the sequence $\sigma'$).
Observe that $ALG'(\sigma') \geq ALG(\sigma)$,
since the static pricing scheme incurs the same load on every machine as the online greedy algorithm, in addition to the load
of one of the $m$ jobs introduced at the beginning of the entire sequence.  By assumption, we have
$ALG(\sigma) > c \cdot L^*(\sigma) + a$.  Finally, we have $L^*(\sigma') \leq L^*(\sigma) + \pi_{\max}$,
since one feasible solution is to assign jobs to machines precisely as an optimal solution on input $\sigma$ does,
along with placing the $m$ initial jobs on $m$ distinct machines (incurring an additional load of at most $\pi_{\max}$).
Putting everything together, we get
$ALG'(\sigma') \geq ALG(\sigma) > c \cdot L^*(\sigma) + a \geq c \cdot (L^*(\sigma') - \pi_{\max}) + a$.  Since
$L^*(\sigma') \geq L^*(\sigma)$ can be made arbitrarily large (in particular we can make $L^*(\sigma) \gg \pi_{\max}$),
we have $ALG'(\sigma') > \Omega(c) \cdot L^*(\sigma') + a$, giving the lemma.
\end{proof}

We now provide our randomized lower bound.
\begin{lemma}\label{lem:randequiv}
A lower bound of $c$ on the competitive ratio of the greedy algorithm that assigns each job $j$ to a machine $i$
that minimizes $\ell_i(j-1) + p_{ij}$ implies a lower bound of $c$ on the expected competitive ratio of
any randomized static pricing scheme (for the identical, related, restricted, and unrelated machine models).
This implication holds as long as the greedy lower bound does not specify how ties are broken.
\end{lemma}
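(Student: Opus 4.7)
Let $D$ denote the randomized static pricing scheme, described by a distribution over finite price vectors $\pi = (\pi_{1*}, \ldots, \pi_{m*}) \in \mathbb{R}^m$. Since translating all prices by a constant does not affect agents' choices, I may assume $\min_i \pi_{i*} = 0$, so $\pi_{\max} := \max_i \pi_{i*}$ equals the effective price range. By hypothesis, for every additive constant $a$, the deterministic greedy algorithm admits an input sequence $\sigma$ with $L^*(\sigma)$ arbitrarily large and with greedy makespan at least $c \cdot L^*(\sigma) + a$ under any tie-breaking rule. The plan is to feed $D$ a suitably scaled copy $\alpha\sigma$ of such a $\sigma$ (i.e., multiply every processing time by a factor $\alpha$), chosen so that the random prices become negligible compared to load differences, forcing $D$ to mimic a greedy execution on $\sigma$ with only the tie-breaking rule depending on $\pi$.

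\textbf{Scale selection.} Fix $\delta > 0$. Since $\pi_{\max}$ is a.s. finite, pick $M$ with $\Pr[\pi_{\max} \le M] \ge 1 - \delta$. For the fixed sequence $\sigma$, only finitely many greedy execution traces are possible (over all tie-breaking rules); let
\[
g := \min\bigl\{|\ell_i(j-1) + p_{ij} - \ell_{i'}(j-1) - p_{i'j}| : \text{this quantity is nonzero in some trace}\bigr\},
\]
with the convention $g = +\infty$ if no strict comparison ever occurs. Then $g > 0$, and I choose $\alpha > 2M/g$ (any $\alpha$ suffices when $g = +\infty$).

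\textbf{Coupling argument.} I condition on the good event $E = \{\pi_{\max} \le M\}$ and show by induction on $j$ that, on input $\alpha\sigma$, $D$'s assignment of job $j$ coincides with that of some greedy execution on $\sigma$, so after $j$ every load under $D$ is exactly $\alpha$ times the corresponding greedy load. Under the inductive hypothesis, job $j$'s agent minimizes $\alpha(\ell_i(j-1) + p_{ij}) + \pi_{i*}$. If greedy on $\sigma$ strictly prefers some machine $k$ to another $i'$ by a gap of at least $g$, then $\alpha g > 2M \ge \pi_{i'*} - \pi_{k*}$, so $D$ preserves the preference; when greedy has a tie between several machines, $D$ picks among them the one minimizing $\pi_{i*}$, which is a valid tie-breaking rule on $\sigma$. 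This establishes the inductive coupling.

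\textbf{Conclusion.} By the tie-breaking-robust greedy lower bound, $D$'s makespan on $\alpha\sigma$ is at least $\alpha(c \cdot L^*(\sigma) + a) = c \cdot L^*(\alpha\sigma) + \alpha a$ on $E$, and at least $0$ otherwise. Taking expectations and using $L^*(\alpha\sigma) = \alpha L^*(\sigma)$,
\[
\mathbb{E}[D(\alpha\sigma)] \ge (1 - \delta)\bigl(c \cdot L^*(\alpha\sigma) + \alpha a\bigr),
\]
so letting $L^*(\sigma) \to \infty$ drives the expected competitive ratio to $(1 - \delta)c$, and sending $\delta \to 0$ yields the claimed lower bound of $c$. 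The main obstacle, and where I expect most care, is handling the randomness of $\pi$: the distribution of $\pi_{\max}$ may be unbounded, which I address by the $\delta$-quantile truncation $M$; and the inductive coupling must tolerate arbitrary agent tie-breaking, which is exactly why the gap $g$ together with the stated robustness of the greedy lower bound is needed.
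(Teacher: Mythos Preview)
Your proof is correct and follows essentially the same approach as the paper: scale the greedy lower-bound instance so that the (random) prices become negligible relative to the smallest nonzero load gap across all greedy executions, forcing the pricing scheme to mimic some greedy tie-breaking. Your quantile truncation (choosing $M$ with $\Pr[\pi_{\max}\le M]\ge 1-\delta$ and then sending $\delta\to 0$) makes rigorous a step the paper leaves informal---the paper simply asserts one can ``scale all jobs so that each job's processing times are significantly larger relative to the prices'' without addressing unbounded price distributions---so your version is, if anything, a cleaner treatment of the same idea.
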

\begin{proof}
Assume we have some randomized static pricing scheme, the prices of which are given by $\pi = (\pi_{1*},\ldots,\pi_{m*})$,
where each $\pi_{i*}$ for $1 \leq i \leq m$ is a random variable.
The idea behind the proof is to scale each job's processing times up by such a large amount that the randomly produced
prices on all machines become negligible.  In particular, we scale jobs up by such a large amount
that even the largest price is negligible compared to the smallest (non-zero) processing time of any
job on any machine.  We note that the following argument holds for all machine models, and in particular for
identical, related, restricted, and unrelated machines.

We scale jobs up as follows.  Consider feeding the input sequence $\sigma$ as input to the greedy
algorithm, and suppose that each time a job $j$ is assigned to a machine, there are no
ties to be broken (i.e., $\ell_i(j-1) + p_{ij} \neq \ell_k(j-1) + p_{kj}$ for all machines $i \neq k$).
This is not true in general (in particular, it is never true for the identical machines setting, as even
the first job faces the same load on all machines, namely zero), but we consider this case first for simplicity.
Over the run of the greedy algorithm on input $\sigma$, consider the smallest gap that ever exists between a pair of machines.
Namely, consider $\delta = \min_j \min_{k \neq i} |\ell_i(j-1) + p_{ij} - \ell_k(j-1) - p_{kj}|$ (note that $\delta \neq 0$
by our assumption).  As long as we scale up jobs so that the random prices are $\ll \delta$, then jobs
are assigned by the randomized static pricing scheme in precisely the same manner as they are by the greedy
algorithm.  This holds since, if in the greedy algorithm machine $i$ minimizes $\ell_i(j-1) + p_{ij}$
(note that this machine is unique by our assumption), then machine $i$ also minimizes $\ell_i(j-1) + p_{ij} + \pi_{i*}$.
In particular, for all $k \neq i$ we have $\ell_k(j-1) + p_{kj} + \pi_{k*} \geq \ell_i(j-1) + p_{ij} + \delta + \pi_{k*} > \ell_i(j-1) + p_{ij} + \pi_{i*}$.

We now discuss the scenario when some jobs need to break ties among at least two machines over
the run of the greedy algorithm on input $\sigma$.  In this case, we define $\delta$ to be the smallest non-zero
gap $\delta$ that ever exists over all possible runs of the greedy algorithm on input $\sigma$.  By all possible runs,
we mean considering all possible ways that the greedy algorithm can resolve ties among machines for each job.
Similarly in this case, we scale all jobs so that each job's processing times are significantly larger relative to the prices.
Now, for a job $j$, if all possible runs result in a unique machine $i$ that minimizes $\ell_i(j-1) + p_{ij}$,
then the randomized static pricing scheme will assign it to the same machine.  On the other hand, if there exists
a run that results in job $j$ facing ties among multiple machines, then the randomized static pricing scheme
assigns $j$ to one such machine (note that, by the assumption in the statement of the lemma, job $j$ is free
to be assigned to any such machine since the greedy lower bound does not specify how ties are broken).
\end{proof}

\section{An illustrative Example}\label{app:example}


We motivate
why dynamic pricing is useful via a small example.  We do so by comparing the schedule produced without any
pricing to the schedule produced via a dynamic pricing scheme.
For ease of presentation, we assume that our scheme knows the value of the optimal makespan, which we denote by $L^*$.
Schedules obtained without pricing are equivalent to schedules produced by the greedy algorithm that assigns each job $j$ to a machine $i$ that minimizes $\ell_i(j-1) + \frac{p_j}{s_i}$.

\begin{figure}
\includegraphics[width=\textwidth]{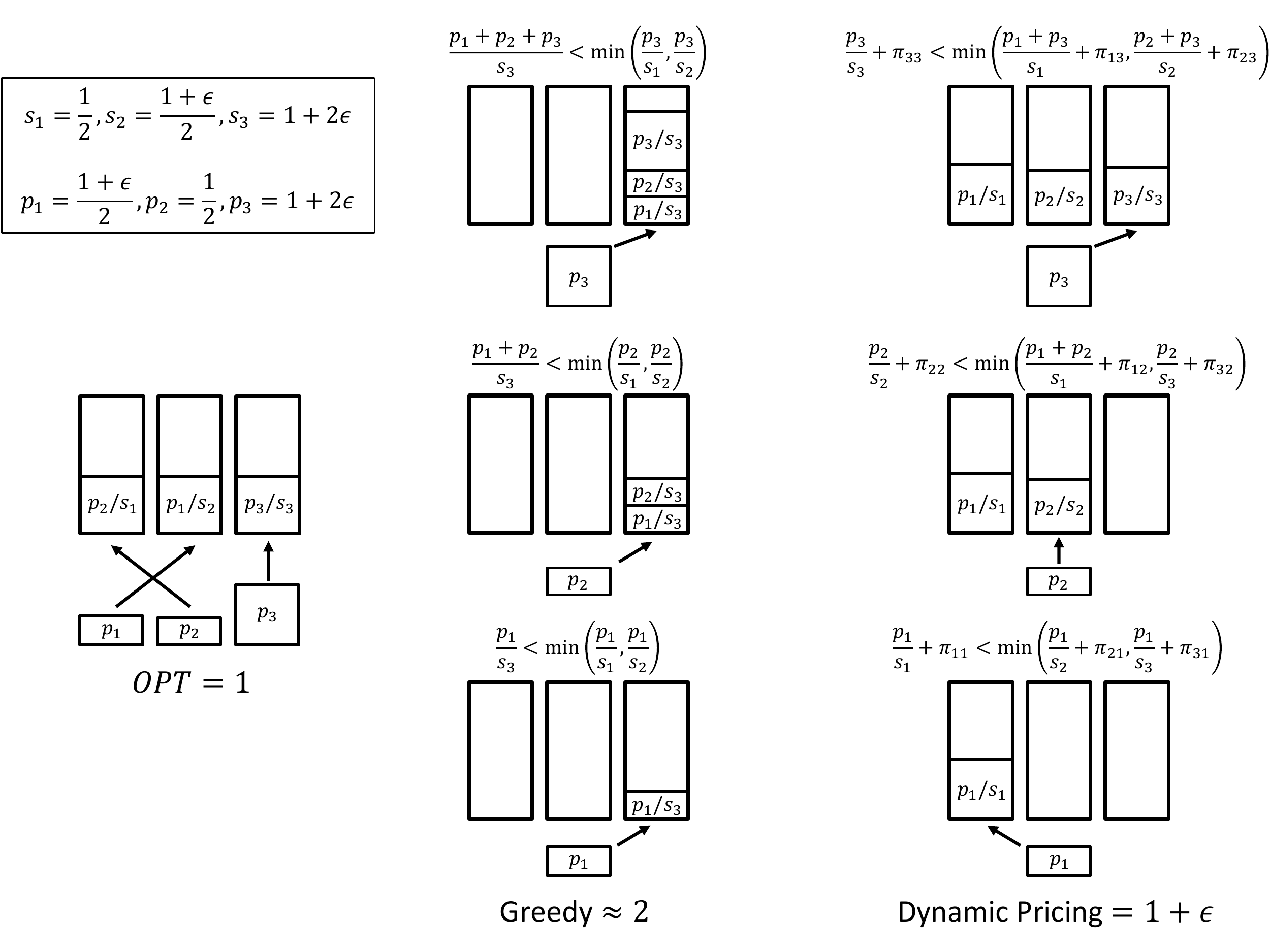}
\caption{Example of the optimal makespan, the greedy algorithm, and dynamic pricing set using Algorithm \ref{alg:dyrel}.}
\center{\begin{tabular}{lll}
  \toprule
  \multicolumn{3}{c}{Prices}\\
  \midrule
   $\pi_{11}=0$ & $\pi_{21}>\epsilon$ & $\pi_{31}\approx 1+ \pi_{21}$\\
   $\pi_{12}=0$ & $\pi_{22}<1+\epsilon+\epsilon/(1+\epsilon)$ & $\pi_{32}\approx 2$ \\
   $\pi_{13}=0$ & $\pi_{23}\approx \epsilon$ & $\pi_{33}\approx 3/2$ \\
  \bottomrule
\end{tabular}}
\label{fig:example}
\end{figure}

In our example (given as Figure \ref{fig:example}) there are $m=3$ machines, with speeds $s_1 = \frac{1}{2}$, $s_2 = \frac{1}{2}(1+\epsilon)$, and $s_3 = 1+2\epsilon$;  and $n=3$ jobs, with sizes $p_1 = \frac{1}{2}(1+\epsilon)$, $p_2 = \frac{1}{2}$, and $p_3 = 1+2\epsilon$.
The left, middle, and right columns show the optimal assignment, the greedy assignment, and the assignment obtained by our dynamic pricing scheme, respectively. In the middle and right columns, the arrival order is from bottom to top.

{\bf Optimal makespan:} The optimal makespan is $L^* = 1$, achieved by assigning job $1$ to machine $2$, job $2$ to machine $1$, and job $3$ to machine $3$.

{\bf Greedy:} The greedy algorithm assigns job $1$ to machine $3$, since the machine $i$ that minimizes $\frac{p_1}{s_i}$ is the fastest machine (initially, all loads are $0$).  Job $2$ is also assigned to machine $3$, since
$\ell_3(1) + \frac{p_2}{s_3} = \frac{2+\epsilon}{2(1+2\epsilon)} < \frac{1}{1+\epsilon} = \frac{p_2}{s_2} < \frac{p_2}{s_1}$ (for
sufficiently small $\epsilon > 0$).  Lastly, job $3$ is also assigned to machine $3$, since
$\ell_3(2) + \frac{p_3}{s_3} = \frac{2 + \frac{5\epsilon}{2}}{1+2\epsilon} < \frac{2(1+2\epsilon)}{1+\epsilon} = \frac{p_3}{s_2} < \frac{p_3}{s_1}$.
Hence, the greedy algorithm assigns all jobs to machine $3$, resulting in a makespan of $\frac{p_1 + p_2 + p_3}{s_3} \approx 2$.

{\bf Pricing scheme:}
Our dynamic pricing scheme sets prices before the arrival of each job, and are independent of the type of the incoming job.
The prices we use are the prices generated by our $O(1)$-competitive dynamic pricing scheme.
We defer the explanation of how to construct these prices to Section \ref{sec:prices-related}.
The following are the prices set prior to the arrival of job $1$:
$\pi_{11} = 0$, $\pi_{21} = \left(1 - \frac{s_1}{s_2}\right)\cdot \left(2+\frac{\epsilon}{2}\right)L^* =
\left(2+\frac{\epsilon}{2}\right)\cdot \frac{\epsilon}{1+\epsilon} > \epsilon$, and
$\pi_{31} = \left(1 - \frac{s_2}{s_3}\right)\cdot \left(2+\frac{\epsilon}{2}\right)L^* + \pi_{21} \approx 1 + \pi_{21}$.
Hence, job $1$ chooses machine $1$,
since $c_{11} = \frac{p_1}{s_1} + \pi_{11} = 1+\epsilon < 1 + \pi_{21} = \frac{p_1}{s_2} + \pi_{21} = c_{21} <
\frac{1}{2} + 1 + \pi_{21} \approx \frac{p_1}{s_3} + \pi_{31} = c_{31}$.

Prior to the arrival of job $2$, the dynamic pricing scheme sets prices as follows:
$\pi_{12} = 0, \pi_{22} = \ell_1(1) + \left(1 - \frac{s_1}{s_2}\right)\left(\left(2+\frac{\epsilon}{2}\right)L^* - \ell_1(1)\right) < 1+\epsilon + \frac{\epsilon}{1+\epsilon}$,
and $\pi_{32} = \left(1 - \frac{s_2}{s_3}\right)\left(2 + \frac{\epsilon}{2}\right)L^* + \pi_{22} \approx 2$.  Hence,
job $2$ chooses machine $2$, since $c_{22} = \frac{p_2}{s_2} + \pi_{22} < \frac{2p_2}{1+\epsilon} + 1+\epsilon + \frac{\epsilon}{1+\epsilon}
= \ell_1(1) + \frac{p_2}{s_1} = c_{12} < \frac{p_2}{1+2\epsilon} + 2 \approx \frac{p_2}{s_3} + \pi_{32} = c_{32}$.

Finally, prior to the arrival of job $3$, the dynamic pricing scheme sets the following prices:
$\pi_{13} = 0$, $\pi_{23} = \ell_1(2) - \ell_2(2) + \left(1 - \frac{s_1}{s_2}\right)\left(\left(2 + \frac{\epsilon}{2}\right)L^* - \ell_1(2)\right) \approx \epsilon$,
and $\pi_{33} = \ell_2(2) + \left(1 - \frac{s_2}{s_3}\right)\left(\left(2+\frac{\epsilon}{2}\right)L^* - \ell_2(2)\right) + \pi_{23} \approx \frac{3}{2}$.
Hence, job $3$ chooses machine $3$, since $c_{33} = \frac{p_3}{s_3} + \pi_{33} \approx \frac{5}{2}$, while
$c_{13} = \ell_1(2) + \frac{p_3}{s_1} = 1+\epsilon + 2p_3 \approx 3$ and
$c_{23} = \ell_2(2) + \frac{p_3}{s_2} + \pi_{23} = \frac{1}{1+\epsilon} + \frac{2p_3}{1+\epsilon} + \pi_{23} \approx 3$.

Since machine $1$ has the highest load, the schedule produced by our dynamic pricing scheme achieves a makespan of $\ell_1(3) = 1+\epsilon$.
This example can be extended to show that greedy can be as bad as $\Omega(\log m)$-competitive, while in
contrast our dynamic pricing scheme is $O(1)$-competitive.

\medskip

\section{Missing proofs from Section~\lowercase{\ref{sec:prices-related}}}
\label{app:proofs-related}

{\bf Proof of Lemma~\ref{lem:phaseload}:}
\begin{proof}
Fix any arriving job $j$, and assume that $S \neq \emptyset$.  Since $S \neq \emptyset$, we must also have that $T \neq \emptyset$.
Hence, the algorithm assigns the job to any representative of a machine $i$ when $j$ arrives, where $s_i \leq s_k$ and
$i \in T$ (recall that $k$ is the minimum index in $S$).  We wish to show that
$\hat{\ell}_{r_i(j)}(j-1) + \frac{p_j}{s_i} \leq (2+\epsilon)\Lambda$.  Let $i$ be any machine in $T$, which implies
that $\hat{\ell}_i(j-1) + \frac{p_j}{s_i} \leq  (2+\epsilon) \Lambda$.
The representative $r_i(j)$ must be a machine satisfying $\hat{\ell}_{r_i(j)}(j-1) \leq \hat{\ell}_{i}(j-1)$.
Hence, we have $\hat{\ell}_{r_i(j)}(j-1) + \frac{p_j}{s_i} \leq \hat{\ell}_i(j-1) + \frac{p_j}{s_i} \leq  (2+\epsilon) \Lambda$,
which gives the lemma.
\end{proof}

\noindent {\bf Proof of Lemma~\ref{lem:nonempty}:}

\begin{proof}
Suppose $\Lambda \geq L^*$, and let $j > 1$ be any arriving job after our estimate $\Lambda$ exceeds $L^*$.
Note that, once $\Lambda \geq L^*$, virtual loads are reset to zero, and hence we only consider jobs that
have arrived since then.  Assume towards a contradiction that $S = \emptyset$.
Let $f$ be the fastest machine satisfying $\hat{\ell}_f(j-1) \leq L^*$, namely $f = \max\{i : \hat{\ell}_i(j-1) \leq L^* \}$.
If such a machine does not exist, we set $f = 0$ (in fact, we will show that such a machine must exist).
For any machine $i$ such that $s_i = s_m$, we must have $\hat{\ell}_i(j-1) > L^*$, since otherwise we would have a contradiction to the fact
that $S = \emptyset$: $\hat{\ell}_i(j-1) + \frac{p_j}{s_i} = \hat{\ell}_i(j-1) + \frac{p_j}{s_m} \leq L^* + L^* \leq 2\cdot \Lambda$
(note that $\frac{p_j}{s_m} \leq L^*$).  In particular, we have $s_f < s_m$ (assuming $f \geq 1$).

Now, let $\Gamma = \{i : s_i > s_f\}$, and note that each machine in $\Gamma$ has load strictly more than $L^*$
(if $f = 0$, we let $\Gamma$ be the set of all machines).
By the fact that $s_f < s_m$, we know that $\Gamma \neq \emptyset$ (if $f=0$, this is also the case).
Let $J_i$ be the set of jobs assigned to machine $i$ after the estimate $\Lambda \geq L^*$,
and define $J_i^*$ to be the set of jobs that the optimal solution
assigns to $i$ after $\Lambda \geq L^*$.  Then we have:

$$ \sum_{i \in \Gamma}\sum_{j \in J_i}\frac{p_j}{s_m} = \frac{1}{s_m}\sum_{i \in \Gamma}s_i\sum_{j \in J_i}\frac{p_j}{s_i} >
\frac{1}{s_m}\sum_{i \in \Gamma}s_i \cdot L^* \geq \frac{1}{s_m}\sum_{i \in \Gamma}s_i \sum_{j \in J_i^*}\frac{p_j}{s_i} =
\sum_{i \in \Gamma}\sum_{j \in J_i^*}\frac{p_j}{s_m}. $$
This implies that there must exist at least one job $b \leq j-1$ such that the online algorithm assigns $b$ to a machine
$i' \in \Gamma$ while the optimal solution assigns $b$ to a machine $i^* \not\in \Gamma$.  Hence, the set $\Gamma$
is neither empty nor the entire set of machines (note this shows that $f \geq 1$, and hence there always exists
a machine $i$ satisfying $\hat{\ell}_i(j-1) \leq L^*$).

Since $b$ was assigned by the optimal solution to a machine $i^* \not\in \Gamma$, we have the property that $\frac{p_b}{s_{i^*}} \leq L^*$.
Moreover, since $i^* \not\in \Gamma$, we know that $s_{i^*} \leq s_f$.  Hence, we have
$\frac{p_b}{s_f} \leq \frac{p_b}{s_{i^*}} \leq L^*$.  In addition, since $\hat{\ell}_f(j-1) \leq L^*$,
we have $\hat{\ell}_f(b-1) + \frac{p_b}{s_f} \leq \hat{\ell}_f(j-1) + \frac{p_b}{s_f} \leq 2 \cdot L^* \leq 2 \cdot \Lambda$
(since $b \leq j-1$ and virtual loads can only grow within a phase).  Thus, when job $b$ arrived, the set $S$ was nonempty (and hence,
$T$ was also nonempty).  This implies that the online algorithm assigned job $b$ to the representative $i' = r_i(b)$ of some machine
$i \in T$ where $s_i \leq s_k$ (recall that $k$ is the minimum machine index in $S$).  In particular, machine $f \in S \subseteq T$,
and hence $s_i \leq s_k \leq s_f$.  Since $r_i(b) \in \Gamma$, we have
$s_i \leq s_f < s_{r_i(b)} = s_i$, which yields a contradiction and gives the lemma.
\end{proof}

\noindent {\bf Proof of Lemma~\ref{lem:prices}:}

\begin{proof}

First, we suppose $|B| \geq 2$. We claim that for all $1 \leq b \leq |B|-1$, we have
$c_{r_{t_b}j} \leq c_{r_{t_{b+1}}j} \Leftrightarrow \hat{\ell}_{t_b}(j-1) + \frac{p_j}{s_{t_b}} \leq (2+\epsilon)\Lambda$.
In particular, we have the following:
\begin{align*}
c_{r_{t_b}j} \leq c_{r_{t_{b+1}}j} &\Longleftrightarrow
\ell_{r_{t_b}}(j-1) + \frac{p_j}{s_{r_{t_b}}} + \pi_{r_{t_b}j} \leq \ell_{r_{t_{b+1}}}(j-1) + \frac{p_j}{s_{r_{t_{b+1}}}} + \pi_{r_{t_{b+1}}j}\\
&\Longleftrightarrow p_j\left(\frac{1}{s_{t_b}} - \frac{1}{s_{t_{b+1}}}\right) \leq \ell_{r_{t_{b+1}}}(j-1) - \ell_{r_{t_b}}(j-1) + \pi_{r_{t_{b+1}}j} - \pi_{r_{t_b}j}.
\end{align*}
Substituting for $\pi_{r_{t_{b+1}}j}$, we find that the right hand side of the expression is given by:
\begin{align*}
\ell_{r_{t_{b+1}}}(j-1) &- \ell_{r_{t_b}}(j-1) + \\
&\left[\ell_{r_{t_b}}(j-1) - \ell_{r_{t_{b+1}}}(j-1) +
\left(1 - \frac{s_{t_b}}{s_{t_{b+1}}}\right)((2+\epsilon)\Lambda - \hat{\ell}_{t_b}(j-1)) + \pi_{r_{t_b}j}\right] - \pi_{r_{t_b}j}\\
&= \left(1 - \frac{s_{t_b}}{s_{t_{b+1}}}\right)((2+\epsilon)\Lambda - \hat{\ell}_{t_b}(j-1)).
\end{align*}
Hence, we have:
\begin{align*}
c_{r_{t_b}j} \leq c_{r_{t_{b+1}}j}
&\Longleftrightarrow
p_j\left(\frac{1}{s_{t_b}} - \frac{1}{s_{t_{b+1}}}\right) \leq \left(1 - \frac{s_{t_b}}{s_{t_{b+1}}}\right)((2+\epsilon)\Lambda - \hat{\ell}_{t_b}(j-1))\\
&\Longleftrightarrow p_j \leq s_{t_b}((2+\epsilon)\Lambda - \hat{\ell}_{t_b}(j-1))
\Longleftrightarrow \hat{\ell}_{t_b}(j-1) + \frac{p_j}{s_{t_b}} \leq (2+\epsilon)\Lambda,
\end{align*}
where we used the fact that $s_{t_b} < s_{t_{b+1}}$ when dividing both sides by $(\frac{1}{s_{t_b}} - \frac{1}{s_{t_{b+1}}})$ (in the second step).
Thus, in the end we conclude that job $j$ prefers machine $r_{t_b}$ to machine $r_{t_{b+1}}$ if and only if machine $t_b$
belongs to set $T$.  Note that, due to tie-breaking issues, it is possible for a job $j$ to choose machine $r_{t_{b+1}}$ when
$p_j = s_{t_b}((2+\epsilon)\Lambda - \hat{\ell}_{t_b}(j-1))$.  We eventually argue that this
does not create any issues (note that if $p_j < s_{t_b}((2+\epsilon)\Lambda - \hat{\ell}_{t_b}(j-1))$, then
job $j$ strictly prefers machine $r_{t_b}$ to machine $r_{t_{b+1}}$).  In addition, by the same reasoning,
we can conclude that if $s_{t_{|B|}} \neq s_m$, then $c_{r_{t_{|B|}}j} \leq c_{r_mj} \Leftrightarrow t_{|B|} \in T$.
\end{proof}

\noindent {\bf Proof of Lemma~\ref{lem:Tempty}:}

\begin{proof}
We first consider the case when set $T$ is empty.  In this case, $FF$ allows job $j$ to be assigned to any machine of the fastest speed (namely, speed $s_m$),
and begins a new phase by updating $\Lambda$ and sets all virtual loads to $0$.  Since set $T$ is empty, we know that
for all machines $i$, we have $\hat{\ell}_i(j-1) + \frac{p_j}{s_i} > (2+\epsilon)\Lambda$.  If $s_{t_{|B|}} \neq s_m$,
then by Lemma~\ref{lem:prices}, job $j$ strictly prefers machine $r_m$ to the representatives of all machines in $B$, namely
$c_{r_mj} < c_{r_{t_b}j}$ for all $1 \leq b \leq |B|$ (note that these are the only machines that receive finite prices,
and hence are the only machines with a finite cost to the job).

If $s_{t_{|B|}} = s_m$, then we have two cases depending on the size of $B$.  If $|B| = 1$, then machine $r_m$ is the only
machine that receives a finite price.  If $|B| > 1$, then since $t_{|B|-1} \not\in T$, by Lemma~\ref{lem:prices} we know that job $j$
strictly prefers machine $r_{t_{|B|}}$ (which has the same speed as machine $m$) to machine $r_{t_{|B| - 1}}$ (machine $r_{t_{|B|}}$
is also strictly preferred to all machines $r_{t_b}$ for $1 \leq b \leq |B|-1$).  Thus, in all cases, the machine $r_m$ is strictly preferred
to all other machines. Hence, $DR$ sets prices so that a rational job $j$ always chooses $r_m$ ($FF$ is free to assign $j$ to $r_m$).  Moreover, after job $j$ chooses $r_m$,
$DR$ checks if $S = \emptyset$ (which it is in this case, as $T = \emptyset$), and updates the estimate $\Lambda$ along with all virtual loads
in the same manner as $FF$.
\end{proof}

\noindent {\bf Proof of Lemma~\ref{lem:TnemptySempty}:}

\begin{proof}
Now we consider the case when set $T$ is nonempty, but set $S$ is empty.  In this case, the algorithm $FF$ is allowed to assign job $j$
in several ways. $FF$ is free to assign job $j$ to any machine of the fastest speed and begin a new phase (i.e.,
update the estimate $\Lambda$ and set all virtual loads to $0$).  $FF$ may also choose to forgo starting a new phase,
in which case it is free to assign job $j$ to any machine $r_i$ where $i \in T$.  Note that, in general, $FF$ is free
to assign job $j$ to any machine $r_i$ where $i \in T$ and $s_i \leq s_k$, but in this case $k = m$ (recall that $k$ is the minimum
machine index in $S$, but since $S = \emptyset$, $FF$ sets $k = m$).  $DR$ may or may not set prices so that job $j$ chooses a machine
of speed $s_m$ (namely, the representative $r_m$).  If job $j$ does choose a machine of speed $s_m$, then after
choosing the machine, $DR$ checks if $S = \emptyset$ (which it is in this case), and begins a new phase by updating the
estimate $\Lambda$ and resetting all virtual loads to $0$ (in a manner consistent with $FF$).

Hence, we need only consider the case when job $j$ chooses the representative of a machine with speed strictly less than $s_m$.
Suppose job $j$ chooses some machine $r_{t_b}$, where $1 \leq b \leq |B|$.  Notice that, if job $j$ chooses $r_{t_b}$ where
$b < |B|$ (so that $|B| \geq 2$), then we have $c_{r_{t_b}j} \leq c_{r_{t_{b+1}}j}$, which by Lemma~\ref{lem:prices} implies
that $t_b \in T$.  On the other hand, if job $j$ chooses $r_{t_b}$ where $b = |B|$, then assuming $s_{t_{|B|}} \neq s_m$
(as otherwise we are done, since this would contradict the fact that $j$ chooses a machine with speed strictly less than $s_m$),
we again have $c_{r_{t_{|B|}}j} \leq c_{r_mj}$, which implies $t_{|B|} \in T$ (by Lemma~\ref{lem:prices}).  Hence, in all cases job $j$ chooses
the representative $r_i$ of some machine $i$, where $i \in T$ (assuming it is not assigned to a machine of speed $s_m$).
\end{proof}

\noindent {\bf Proof of Lemma~\ref{lem:Snempty}:}

\begin{proof}
We consider the case when both sets $T$ and $S$ are nonempty.  In this case,  $FF$ is free to assign job $j$
to any machine $r_i$ where $i \in T$ and $s_i \leq s_k$.  We argue that $DR$ sets prices so that $j$ chooses a machine in the same manner.
Since $S \neq \emptyset$, there exists a machine $i \in S$ where $i$ satisfies
$\hat{\ell}_i(j-1) + \frac{p_j}{s_i} \leq 2\cdot \Lambda$, which implies $p_j \leq s_i(2\cdot \Lambda - \hat{\ell}_i(j-1)) = \mu_{i_q}$
for some $1 \leq q \leq m$.  Notice that $\mu_{i_q} \leq \mu_{i_m}$, and hence we have $p_j \leq \mu_{i_m}$.  Moreover, we always
have the property $\mu_{t_{|B|}} = \mu_{i_m}$, since otherwise set $A$ would be nonempty and hence $DR$ would add more
elements to set $B$.  Therefore, we know $p_j \leq \mu_{t_{|B|}}$.  Thus, let $1 \leq b \leq |B|$ be the smallest value satisfying $p_j \leq \mu_{t_b}$
(notice that such a value $b$ must exist as $p_j \leq \mu_{t_{|B|}}$).

We now argue that $s_k = s_{t_b}$ and assume towards a contradiction that $s_k < s_{t_b}$ (clearly,
$s_k \leq s_{t_b}$, since machine $t_b$ is in $S$).  If $b=1$, we are done since $s_{t_1}$ is the speed of the slowest machine,
and hence $s_{t_1} \leq s_k$ (which implies $s_{t_1} = s_k$).  Suppose $b > 1$, in which case we have $\mu_{t_1} \leq \cdots \leq \mu_{t_{b-1}} < p_j$.
If a machine of speed $s_k$ was in set $A$ when $DR$ added machine $t_b$ to set $B$, then we have a contradiction as a machine
of speed $s_k$ would have been added to $B$ instead of machine $t_b$, since we assumed $s_k < s_{t_b}$.  Hence,
assume that all machines of speed $s_k$ were already removed from set $A$ when $DR$ added $t_b$ to $B$.
This means that all machines of speed $s_k$ appear earlier in the sorted ordering $\mu_{i_1} \leq \cdots \leq \mu_{i_m}$ than
machine $t_{b-1}$ (possibly including the same position).  This is a contradiction, since we know $\mu_{t_{b-1}} < p_j$,
and hence all machines of speed $s_k$ do not belong to set $S$, in which case every machine of minimum speed in $S$
has speed strictly more than $s_k$.

If $s_{t_{|B|}} \neq s_m$, then observe that job $j$ strictly prefers $r_{t_{|B|}}$ to $r_m$, since
$p_j \leq \mu_{t_b} \leq \cdots \leq \mu_{t_{|B|}}$ and
hence $p_j < s_{t_{|B|}}((2+\epsilon)\Lambda - \hat{\ell}_{t_{|B|}})$, implying $c_{r_{t_{|B|}}j} < c_{r_m j}$
(the proof of Lemma~\ref{lem:prices} shows this).
Thus, $DR$ sets prices so that $j$ chooses some machine $r_{t_h}$ where $1 \leq h \leq |B|$, whether or not $s_{t_{|B|}} = s_m$
(since if $s_{t_{|B|}} = s_m$, then $j$ always chooses some machine $r_{t_h}$ where $1 \leq h \leq |B|$).
Assuming $b < |B|$, then since $p_j \leq \mu_{t_b} \leq \mu_{t_{b+1}} \leq \cdots \leq \mu_{t_{|B|}}$
we have $t_b,\ldots,t_{|B|} \in T$, implying $c_{t_b j} \leq \cdots \leq c_{t_{|B|}j}$.
In fact, since $p_j \leq \mu_{t_b} < s_{t_b}((2+\epsilon)\Lambda - \hat{\ell}_{t_b})$ we know job $j$ strictly prefers $t_b$
to all other machines $t_{b+1},\ldots,t_{|B|}$ (by Lemma~\ref{lem:prices}).  Thus, job $j$ chooses some machine $r_{t_h}$ where
$1 \leq h \leq b$.  Notice that all such machines have $s_{r_{t_h}} = s_{t_h} \leq s_{t_b} = s_k$.
If job $j$ chooses $r_{t_b}$, then we are done since $p_j \leq \mu_{t_b}$, implying $t_b \in T$.
Otherwise, job $j$ chooses some machine $r_{t_h}$ for $1 \leq h < b$, which implies that
$c_{r_{t_h}j} \leq c_{r_{t_{h+1}}j}$, and hence $t_h \in T$.  Thus, $DR$ behaves
consistently with $FF$.
\end{proof}

\end{document}